\journal{Computers and Electrical Engineering}
\begin{document}

\begin{frontmatter}



\title{Efficient reformulations of ReLU deep neural networks for surrogate modelling in power system optimisation}


\author[1]{Yogesh Pipada Sunil Kumar} 
\author[1]{S. Ali Pourmousavi}
\author[2]{Jon A.R. Liisberg}
\author[2]{Julian Lesmos-Vinasco}

\affiliation[1]{organization={School of Electrical and Mechanical Engineering, The University of Adelaide},
            addressline={Ingkarni Wardli Building}, 
            city={Adelaide},
            state={SA 5005},
            country={Australia}}
\affiliation[2]{organization={Watts A/S},
            addressline={Brogade 19D}, 
            city={K\o ge 4600},
            country={Denmark}}

\begin{abstract}
\noindent The ongoing decarbonisation of power systems is driving an increasing reliance on distributed energy resources, which introduces complex and nonlinear interactions that are difficult to capture in conventional optimisation models. As a result, machine learning based surrogate modelling has emerged as a promising approach, but integrating machine learning models such as ReLU deep neural networks (DNNs) directly into optimisation often results in nonconvex and computationally intractable formulations. This paper proposes a linear programming (LP) reformulation for a class of convexified ReLU DNNs with non-negative weight matrices beyond the first layer, enabling a tight and tractable embedding of learned surrogate models in optimisation. We evaluate the method using a case study on learning the prosumer's responsiveness within an aggregator bidding problem in the Danish tertiary capacity market. The proposed reformulation is benchmarked against state-of-the-art alternatives, including piecewise linearisation (PWL), MIP-based embedding, and other LP relaxations. Across multiple neural network architectures and market scenarios, the convexified ReLU DNN achieves solution quality comparable to PWL and MIP-based reformulations while significantly improving computational performance and preserving model fidelity, unlike penalty-based reformulations. The results demonstrate that convexified ReLU DNNs offer a scalable and reliable methodology for integrating learned surrogate models in optimisation, with applicability to a wide range of emerging power system applications.
\end{abstract}



\begin{keyword}
constraint learning,  neural networks, surrogate models, electricity market,  flexibility aggregator, demand response
\end{keyword}

\end{frontmatter}



\section*{List of Abbreviations}
\allowdisplaybreaks
\begin{flushleft}
\begin{tabular}{ll}
  cvxd & convexified \\
  DNN & Deep neural network \\
  LP & Linear program \\
  ML & Machine learning \\
  mFRR & Manual frequency restoration reserve \\ 
  MIP & Mixed integer program \\ 
  OCL & Optimisation with constraint learning \\
  PCAR & Penalised convex area relaxation \\
  PCTAR & Penalised convex triangle area relaxation \\
  PWL & Piecewise linearisation \\
  ReLU & Rectified linear unit\\
  RMSE & Root mean square error \\
  SOTA & State-of-the-art \\
  UC & Unconstrained 
\end{tabular}
\end{flushleft}
\nomenclature[A]{$\mathbb{R}$}{Set of real numbers}
\nomenclature[A]{$\mathbb{R}_{+}$}{Set of non-negative real numbers}
\nomenclature[A]{$\mathcal{T}$}{Set of time steps, $\{1, \dots , T\}$}
\nomenclature[A]{$\mathcal{L}$}{Index of ReLu DNN layers, $\{1, \dots , L+1\}$}
\nomenclature[A]{$\mathbb{N}$}{Set of natural numbers}
\nomenclature[B]{$T$}{Optimisation horizon, $T \in \mathbb{N}$}
\nomenclature[B]{$L$}{Number of hidden ReLu layers, $L\in \mathbb{N}$}
\nomenclature[B]{$\boldsymbol{A}$}{Cross-temporal matrices, $\boldsymbol{A} \in [0, 1]^{T \times T}$}
\nomenclature[B]{$\boldsymbol{\alpha_{l}}$}{Penalty coefficients for hidden layer $l\in \mathcal{L}$, $\boldsymbol{\alpha_{l}} \in \mathbb{R}^{n_{l}}$}
\nomenclature[B]{$\boldsymbol{W_{l}}$}{Weight matrix of hidden layer $l\in \mathcal{L}$, $\boldsymbol{W_{l}} \in \mathbb{R}^{n_{l-1} \times n_{l}}$}
\nomenclature[B]{$n_{l}$}{Number of neurons in hidden layer $l \in \mathcal{L}$, $n_{l} \in \mathbb{N}$}
\nomenclature[B]{$\boldsymbol{b_{l}}$}{Bias matrix of hidden layer $l \in \mathcal{L}$, $\boldsymbol{b_{l}} \in \mathbb{R}^{n_{l}}$}
\nomenclature[B]{$\boldsymbol{q}, \boldsymbol{r}$}{Prosumer responsiveness shaping parameters, $\boldsymbol{q}, \boldsymbol{r} \in \mathbb{R}_{+}^{T}$}
\nomenclature[B]{$\boldsymbol{\lambda^{\text{M}}}$}{Day-ahead capacity market prices in DKK/MWh, $\boldsymbol{\lambda^{\text{M}}} \in \mathbb{R}^{T}$}
\nomenclature[B]{$\boldsymbol{\overline{x}}$}{Maximum available flexibility in MWh, $\boldsymbol{\overline{x}} \in \mathbb{R}_{+}^{T}$}
\nomenclature[B]{$N_{P}$}{Number of pieces for PWL , $N_{P} \in \mathbb{N}$}
\nomenclature[B]{$N_{D}$}{Number of training data points, $N_{D} \in \mathbb{N}$}
\nomenclature[B]{$LB, UB$}{Bounds for PCTAR reformulation, $LB, UB \in \mathbb{R}$}
\nomenclature[C]{$\boldsymbol{x}$}{Flexibility volume offered to market in MWh, $\boldsymbol{x} \in \mathbb{R}_{+}^{T}$}
\nomenclature[C]{$\boldsymbol{\tilde{x}}$}{Updated maximum available flexibility in MWh, $\boldsymbol{\tilde{x}} \in \mathbb{R}_{+}^{T}$}
\nomenclature[C]{$\boldsymbol{h_{l}}$}{Output vector of hidden layer $l \in \mathcal{L}$, $h_{l} \in \mathbb{R}^{n_{l}}$}
\nomenclature[C]{$\boldsymbol{\lambda^{\text{I}}}$}{Per-unit price paid to prosumers for flexibility in DKK/MWh, $\boldsymbol{\lambda^{\text{I}}} \in \mathbb{R}^{T}$}
\nomenclature[C]{$\boldsymbol{\lambda^{\text{P}}}$}{Total flexibility purchase cost paid by aggregator in DKK, $\boldsymbol{\lambda^{\text{P}}} \in \mathbb{R}_{+}^{T}$}
\printnomenclature[1.7cm]
\section{Introduction}
\label{sec: Introduction}
\subsection{Motivation and related work}
\label{ssec: motivation}
The global transition to low-carbon energy systems is reshaping the technical and operational landscape of modern power grids. Three key developments drive this transition: the increasing penetration of renewable energy sources, the shift toward active monitoring and control within distribution networks, and the integration of emerging energy storage technologies such as batteries and hydrogen~\cite{Fernandez2024NetZeroReview}. While these trends are essential for achieving net-zero targets, they also introduce significant complexity into traditional power system optimisation and control. The resulting operational uncertainty, nonlinear and nonconvex interactions, and high-dimensional decision spaces often exceed the modelling capabilities of classical optimisation formulations. As a result, there is growing interest in combining machine learning (ML) with optimisation to better capture such complex behaviours.

Within this context, the optimisation-constrained learning (OCL) framework has emerged as a promising approach. OCL employs ML-based surrogate models to represent objective or constraint components that are analytically intractable or computationally burdensome to model explicitly~\cite{FAJEMISIN}. Such surrogates effectively capture uncertain, nonlinear, unobservable, or behaviour-driven relationships that are difficult to express using traditional formulations. Demonstrated applications include learning the physics of the distribution grid and embedding them into the optimal power flow of the distribution network~\cite{Angela}; incorporating ML-derived frequency dynamics into unit commitment models~\cite{RAJABDORRI}; and integrating a stochastic load-prediction model into virtual power plant optimisation~\cite{ALCANTARA2023120895}. By embedding these learned relationships directly into optimisation problems, OCL enables a flexible and data-driven framework to address emerging challenges in modern power system operation.

A key finding from the review in~\cite{FAJEMISIN} is that ReLU-based deep neural networks (DNNs) are among the most widely used surrogate models within OCL. Their continuous piecewise-linear structure allows them to approximate highly nonlinear functions while remaining compatible with mixed-integer programming (MIP) formulations~\cite{Chen_Ge, Arora2016-cs, GRIMSTAD}. Tight MIP reformulations, including Big-M representations enhanced with feasibility-based or optimality-based bound tightening, have been developed to embed ReLU DNNs into optimisation problems~\cite{Gurobi_ML, luegrelumip}. These formulations preserve the exactness and allow the optimiser to explicitly reason about the internal structure of the neural network.

However, a major limitation persists: each ReLU neuron introduces an additional binary variable, causing the size of the MIP to grow rapidly with network depth and width. This scalability issue often leaves the resulting optimisation problem intractable, leading most studies to restrict themselves to very shallow ReLU networks~\cite{FAJEMISIN}. The problem becomes even more pronounced when modelling high-dimensional systems or a large number of components. For example, \cite{Zhao_1} demonstrated that directly embedding a ReLU-based battery degradation model using Big-M constraints resulted in severe computational intractability in multi-battery scheduling. Similarly, \cite{ALCANTARA2023120895} reported an exponential increase in computational complexity as the size of the embedded neural network increased. These findings underscore a fundamental bottleneck: despite their expressive power, ReLU DNNs become prohibitively expensive to optimise when encoded using standard MIP reformulations.

To address this, \cite{zhang2023learning} proposed two linear-programming (LP) based relaxations, namely penalised convex area relaxation (PCAR) and penalised convex triangle area reformulation (PCTAR). These relaxations improve tractability and reduce the computational burden of optimisation with embedded DNN models. However, both rely heavily on penalty terms, which introduce sensitivity to model architecture, hyperparameter settings, and problem structure. As their sensitivity analysis showed (Section~\ref{ssec: Penalty sensitivity analysis}), the resulting formulations are not guaranteed to be tight or robust across different applications.

These limitations establish a clear research need: developing tractable yet accurate reformulations for ReLU DNNs that can be efficiently embedded into large-scale optimisation problems without relying on heuristic penalty tuning or sacrificing guaranteed tightness. Addressing this challenge is essential for unlocking the full potential of DNN-based surrogate modelling in modern power system optimisation.

During the preparation of this manuscript, a parallel study was published that adopts a similar overarching idea of integrating convexified (cvxd) DNNs into optimisation models~\cite{parallel}. The two works were developed independently, highlighting the growing interest in improving the tractability of neural-network-based surrogate models for optimisation. While the parallel study demonstrates the approach within a specific joint chance-constrained application, our work takes a broader methodological perspective. We evaluate the convexified DNN formulation across multiple neural network architectures, analyse its behaviour under varied input scenarios, and benchmark its performance against state-of-the-art (SOTA) approaches such as the PCAR and PCTAR formulations. This assessment positions our contribution as a general methodological study rather than an application-specific demonstration.
\subsection{Objectives and contributions}
\label{ssec: contributions}
\noindent In this study, we develop a method to convexify ReLU DNNs, enabling their integration as tight LP reformulations within optimisation problems. We then apply the OCL framework by learning the prosumer responsiveness function using the convexified ReLU DNN and embedding it as an LP reformulation within an aggregator’s profit maximisation problem adapted from~\cite{Yogesh_ISGT}. We further benchmark the proposed approach against state-of-the-art ReLU DNN reformulations and conventional piecewise linearisation, evaluating performance across a range of input parameters and neural network architectures.

\noindent This work differs from and extends the SOTA in several ways:
\begin{itemize}
    \item In contrast to traditional MIP-based ReLU DNN reformulations~\cite{Gurobi_ML,luegrelumip}, our approach yields an LP reformulation, improving computational efficiency.
    \item Compared to existing LP-based reformulations~\cite{zhang2023learning}, the proposed method guarantees tightness and accuracy across a broader range of network architectures and input conditions.
    \item Unlike the parallel work in~\cite{parallel}, which applies the same reformulation to a specific application without examining its broader properties, this study provides a more systematic and extensive evaluation of the methodology as a surrogate modelling strategy, including sensitivity analyses across different input ranges and ReLU DNN architectures.
\end{itemize}

We note that the proposed ReLU DNN reformulation is applicable only when the function or decision variable of interest is explicitly or implicitly minimised by the optimisation problem; further details are provided in Section~\ref{ssec: ReLU DNN}. Despite this caveat, the reformulation has strong potential for broader use in modern power system applications, including the minimisation of nonlinear generator cost functions, CO\textsubscript{2} emissions, and chance-constrained quantile functions, as demonstrated in~\cite{parallel}. It is important to emphasise that the primary objective of this paper is to demonstrate and analyse the properties of the proposed ReLU DNN reformulation as a nonlinear function surrogate and to benchmark it against SOTA surrogate modelling strategies. The optimisation model and case study presented here are designed to support this objective rather than serve as the main focus of the work.

The remainder of the paper is structured as follows: Section~\ref{sec: methodology} presents the proposed ReLU DNN reformulation and summarises the SOTA mathematical formulations. Section~\ref{sec: case study} details the case study within which the proposed formulation is implemented. Section~\ref{sec: Simulation study and results} outlines the simulation setup, presents the results, and provides a comprehensive analysis. Finally, Section~\ref{sec: Conclusions} concludes the paper and outlines directions for future work.
\section{ReLU DNN LP reformulations}
\label{sec: methodology}
In this section, we provide the mathematical background for the proposed ReLU DNN reformulation and the relevant SOTA formulations. This includes the underlying principles, notation, and structural properties required for the subsequent optimisation model.
\subsection{Proposed ReLU DNN reformulation}
\label{ssec: ReLU DNN}
In this section, we discuss the mathematical background, intuition, and key properties of the proposed ReLU DNN reformulation. 
\begin{definition}
    \label{def: ReLU func}
    For any vector $\boldsymbol{z} \in \mathbb{R}^{a}$ with any $ a\in\mathbb{N}$, we can define the element-wise ReLU function $\sigma: \mathbb{R}^{a} \rightarrow \mathbb{R}^{a}_{+}$, as shown in~\eqref{eq: ReLU definition}
    \begin{align}
        \label{eq: ReLU definition}
        \sigma\left(\boldsymbol{z}\right) = \begin{bmatrix}
        \max\left(z_{1}, 0\right) \\
        \max\left(z_{2}, 0\right) \\
        \vdots \\
        \max\left(z_{a}, 0\right)
        \end{bmatrix}
    \end{align}
\end{definition}
\begin{definition}
\label{def: ReLU DNN}
Consider a fully connected ReLU DNN $f:\mathbb{R}^{n_0} \rightarrow \mathbb{R}^{n_{L+1}}$, where $n_{0} \in \mathbb{N}$ is the input size and $L \in \mathbb{N}$ is the number of ReLU hidden layers. Let $n_{l} \in \mathbb{N}$ be the number of output neurons with weights $\boldsymbol{W_{l}} \in \mathbb{R}^{n_{l-1} \times n_{l}}$ and biases $\boldsymbol{b_{l}} \in \mathbb{R}^{n_{l}}$ for each layer $l \in \{1, 2, \dots, L, L+1\}$, where $l=L+1$ is the output layer with $n_{L+1}$ outputs. Given an input vector $\boldsymbol{z} \in \mathbb{R}^{n_{0}}$ such that $\boldsymbol{h_{0}}= \boldsymbol{z}$, the output of the hidden layers $\boldsymbol{h_{l}} \in \mathbb{R}^{n_{l}}$ and the output of the ReLU DNN are specified in~\eqref{eq: ReLU DNN hidden layer} and~\eqref{eq: ReLU DNN output}, respectively. 
\begin{subequations}
\begin{align}
    \label{eq: ReLU DNN hidden layer}
    & \boldsymbol{h_{l}} = \sigma \left[\boldsymbol{W_{l}} \, \boldsymbol{h_{l - 1}} + \boldsymbol{b_{l}} \right] & \quad \forall l \in \{1,\dots, L\} \\
    \label{eq: ReLU DNN output}
     & f(\boldsymbol{z}) = \boldsymbol{W_{L + 1}} \boldsymbol{h_{L}} + \boldsymbol{b_{L+1}} & {}
\end{align}    
\end{subequations}
\end{definition}
\noindent Based on the definitions above, we propose Theorem~\ref{theorem: A1}. 
\begin{theorem}
\label{theorem: A1}
    Let us define functions $c:\mathbb{R}^{n_0}\rightarrow \mathbb{R}$ and $g:\mathbb{R}^{n_0} \rightarrow \mathbb{R}^{e}$, where $e \in \mathbb{N}$. Let us also consider the ReLU DNN $f$ as defined in Definition~\ref{def: ReLU DNN} with a single output, i.e., $n_{L+1} = 1$. Thus, we can define a cost minimisation optimisation problem of the form below:
    \begin{subequations}
    \label{eq: standard optimisation}
    \begin{align}
        \min_{\boldsymbol{z} \in \mathbb{R}^{n_0}} & \quad c(\boldsymbol{z}) + f(\boldsymbol{z}) \\
        \text{s.t.} & \quad g\left(\boldsymbol{z}\right) \leq 0&
    \end{align}
    \end{subequations}
    For some $k \in \mathbb{N}$ and $k \leq L$, if $\boldsymbol{W_{l}} \geq 0 \text{ (element-wise)} \,\, \forall l \in \{k+1, \dots, L+1\}$, then the optimisation problem can be reformulated as
\allowdisplaybreaks
\begin{subequations}
    \label{eq: ReLU DNN reformulation}
    \begin{align}
    \min_{\boldsymbol{z} \in \mathbb{R}^{n_0}} & c(\boldsymbol{z}) + f(\boldsymbol{z}) \\
    \text{s.t.}\,& g(\boldsymbol{z}) \leq 0 \\
    &  f(\boldsymbol{z}) = \boldsymbol{W_{L+1}} \boldsymbol{h_{L}} + \boldsymbol{b_{L+1}} \\
    & \boldsymbol{h_{0}} = \boldsymbol{z} \\
    &  \boldsymbol{h_{l}} = \sigma \left[\boldsymbol{W_{l}} \, \boldsymbol{h_{l - 1}} + \boldsymbol{b_{l}} \right] & \forall l \in \{1, \dots, k-1\} \\
    & \boldsymbol{h_{l}} \geq \boldsymbol{W_{l}} \boldsymbol{h_{l - 1}} + \boldsymbol{b_{l}} & \forall l \in \{k, \dots, L\} \\
    & \boldsymbol{h_{l}} \geq 0 & \forall l \in \{k, \dots, L\}
    \end{align}
\end{subequations}
\end{theorem}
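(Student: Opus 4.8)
The plan is to prove that problems~\eqref{eq: standard optimisation} and~\eqref{eq: ReLU DNN reformulation} share the same optimal value by showing the latter is an \emph{exact} relaxation of the former. The starting observation I would record is that, for each relaxed layer, the constraint pair $\boldsymbol{h_{l}} \geq \boldsymbol{W_{l}}\boldsymbol{h_{l-1}} + \boldsymbol{b_{l}}$ together with $\boldsymbol{h_{l}} \geq 0$ is equivalent to the single inequality $\boldsymbol{h_{l}} \geq \sigma(\boldsymbol{W_{l}}\boldsymbol{h_{l-1}} + \boldsymbol{b_{l}})$, since the element-wise maximum with zero in Definition~\ref{def: ReLU func} is exactly the tightest lower bound jointly implied by the two relaxed constraints. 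This reframes layers $k, \dots, L$ as permitting $\boldsymbol{h_{l}}$ to \emph{overestimate} the true ReLU output rather than to match it exactly.

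The easy direction is to show the reformulation's optimal value is no larger than the original's: for any $\boldsymbol{z}$ feasible in~\eqref{eq: standard optimisation}, setting every $\boldsymbol{h_{l}}$ to its exact forward-pass value satisfies all constraints of~\eqref{eq: ReLU DNN reformulation} (an exact ReLU output trivially obeys the relaxed inequalities) and reproduces the objective $c(\boldsymbol{z}) + f(\boldsymbol{z})$, so the relaxed feasible set contains a copy of the original one. The substantive direction is to show the relaxation cannot do strictly better. Here I would take any feasible point $(\boldsymbol{z}, \boldsymbol{h_{k}}, \dots, \boldsymbol{h_{L}})$ of~\eqref{eq: ReLU DNN reformulation} and, holding $\boldsymbol{z}$ fixed, define tightened iterates $\boldsymbol{\hat{h}_{l}} = \sigma(\boldsymbol{W_{l}}\boldsymbol{\hat{h}_{l-1}} + \boldsymbol{b_{l}})$ seeded by the shared exact layers $l < k$ (for which $\boldsymbol{h_{l}}$ is already pinned to the exact value). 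The core is an induction establishing $\boldsymbol{\hat{h}_{l}} \leq \boldsymbol{h_{l}}$ element-wise for all $l \in \{k, \dots, L\}$: the base case $l=k$ follows immediately from $\boldsymbol{h_{k}} \geq \sigma(\boldsymbol{W_{k}}\boldsymbol{h_{k-1}} + \boldsymbol{b_{k}}) = \boldsymbol{\hat{h}_{k}}$ and notably needs \emph{no} sign assumption on $\boldsymbol{W_{k}}$, which is precisely why the hypothesis begins at index $k+1$; the inductive step chains $\boldsymbol{W_{l}} \geq 0$ with the monotonicity of $\sigma$ so that $\boldsymbol{\hat{h}_{l-1}} \leq \boldsymbol{h_{l-1}}$ propagates through the affine map $\boldsymbol{W_{l}}(\cdot) + \boldsymbol{b_{l}}$ without reversing direction, and is then closed off by the relaxed lower bound on $\boldsymbol{h_{l}}$.

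To finish, since $\boldsymbol{W_{L+1}} \geq 0$, the element-wise bound $\boldsymbol{\hat{h}_{L}} \leq \boldsymbol{h_{L}}$ gives $\boldsymbol{W_{L+1}}\boldsymbol{\hat{h}_{L}} \leq \boldsymbol{W_{L+1}}\boldsymbol{h_{L}}$, so the tightened point attains an objective no larger than the relaxed one; this point is feasible for~\eqref{eq: standard optimisation} because it satisfies the exact ReLU recursion and leaves $g(\boldsymbol{z}) \leq 0$ untouched. Applying this construction to an optimal relaxed solution shows the original optimal value is at most the reformulated one, and combined with the easy direction yields equality. I expect the main obstacle to be the careful bookkeeping of the induction, specifically verifying that each sign hypothesis in $\{k+1, \dots, L+1\}$ is used exactly once — the conditions $\boldsymbol{W_{l}} \geq 0$ for $k+1 \leq l \leq L$ to preserve the inequality through the hidden affine maps, and $\boldsymbol{W_{L+1}} \geq 0$ to convert the terminal element-wise domination into the scalar objective inequality — while confirming that no assumption is silently required on $\boldsymbol{W_{k}}$ or on the signs of the biases $\boldsymbol{b_{l}}$.
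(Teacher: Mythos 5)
Your proposal is correct, and it is a genuinely different and substantially more rigorous route than the one the paper takes. The paper's proof argues only on two toy instances — a single ReLU neuron, then a two-neuron nested composition — showing there that minimisation pressure binds the relaxed constraints when downstream weights are nonnegative, and then generalises by assertion (``Applying this argument, we justify the reformulation''). Your proof supplies precisely the missing general argument: the sandwich structure (every exact forward pass is feasible for the relaxation; every relaxed-feasible point can be tightened to an exact forward pass without increasing the objective), driven by the element-wise induction $\boldsymbol{\hat{h}_{l}} \leq \boldsymbol{h_{l}}$, in which $\boldsymbol{W_{l}} \geq 0$ plus monotonicity of $\sigma$ carries the inductive step and $\boldsymbol{W_{L+1}} \geq 0$ converts the terminal domination into the scalar objective bound. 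Two details of yours improve on the paper: (i) you isolate explicitly that no sign condition is needed on $\boldsymbol{W_{k}}$, which is why the hypothesis starts at $k+1$ — the paper never states this; and (ii) your tightening argument is more accurate than the paper's claim that minimisation ``will ensure'' the relaxed constraints bind at the optimum, since a neuron whose downstream weights vanish can remain slack at optimality, yet your argument still delivers equality of optimal values and of optimal $\boldsymbol{z}$, which is the correct meaning of ``reformulation.'' What the paper's example-based style buys is intuition for why nonnegativity is indispensable (the $a_{i} < 0$ counterexample); what yours buys is an actual proof of the general statement.
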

\begin{proof}
\label{proof: 1}
    Let us define a simplified version of the problem, that is, a ReLU DNN with a single neuron:
\allowdisplaybreaks
    \begin{subequations}
        \begin{align}
         \min_{z \in \mathbb{R}} & \quad f(z)& \\
         s.t. & \quad g(z) \leq 0& \\
         & \quad f(z) = \max(az, 0)& 
        \end{align}
    \end{subequations}
    The standard LP reformulation of this problem is
\allowdisplaybreaks
    \begin{subequations}
        \begin{align}
         \min_{z \in \mathbb{R}} & \quad f(z) & \\
         s.t. & \quad g(z) \leq 0 & \\
         & \quad f(z) \geq az, \,\, f(z) \geq 0 &
        \end{align}
    \end{subequations}
    
Because the problem aims to minimise $f(z)$, this reformulation will ensure $f^{*}(z)=az$ or $f^{*}(z) = 0$. If it was a maximisation problem, $f(z)$ would be unbounded. Thus, replacing $\max$ (or ReLU) constraints with linear inequalities requires minimisation of the output neuron. 
Now let us consider a problem with nested maximum constraints, such as a ReLU DNN: 
\allowdisplaybreaks
    \begin{subequations}
        \begin{align}
         \min_{\boldsymbol{z}, \boldsymbol{u} \in \mathbb{R}^{2}} & \quad f(\boldsymbol{z}) \\
         s.t. & \quad g(\boldsymbol{z, u}) \leq 0 \\
         & \quad f(\boldsymbol{z}) = \max\left(\sum_{j = 1}^{2}a_{i}z_{i}, 0\right) \\
         & \quad z_{i} = \max\left(\sum_{j = 1}^{2} b_{ij} u_{j}, 0\right) & \quad \forall i \in \{1, 2\}
        \end{align}
    \end{subequations}

Here, $a \in \mathbb{R}^{2 \times 1}$ and $ b \in \mathbb{R}^{2 \times 2}$ are the weight matrices of the ReLU DNN layers. A reformulation similar to the one demonstrated above would look as follows:
    \begin{subequations}
        \begin{align}
         \min_{\boldsymbol{z}, \boldsymbol{u} \in \mathbb{R}^{2}} & \quad f(\boldsymbol{z}) \\
         s.t. & \quad g(\boldsymbol{z, u}) \leq 0 \\
         & \quad f(\boldsymbol{z}) \geq \sum_{j = 1}^{2}a_{i}z_{i}, \, \, f(\boldsymbol{z}) \geq 0\\
         & \quad z_{i} \geq \sum_{j = 1}^{2} b_{ij} u_{j}, \,\, z_{i} \geq 0 &  \forall i \in \{ 1, 2\}&
        \end{align}
    \end{subequations}
            
If $a_{1}, a_{2}\geq 0$, the optimisation problem aims to minimise $z_{1}$ and $ z_{2}$ binding the $\max$ constraint. However, if $a_{1} < 0$ or $a_{2} < 0$, the problem aims to maximise $z_{1}$ or $z_{2}$, respectively. Thus, the associated $\max$ constraint becomes unbounded. Solving this problem will provide solutions different from the original problem. Applying this argument, we justify the reformulation presented in~\eqref{eq: ReLU DNN reformulation}. In essence, we are ``convexifying'' a section of the hidden layers (from $k$ to $L$) of a ReLU DNN by constraining their weight matrix. Owing to this convexification, the convexified section can be replaced with LP reformulations within minimisation problems, whereas the non-convexified section (from $1$ to $k-1$) cannot be linearised. We note here that the ReLU DNN can only be convexified in reverse, i.e., by constraining the weights to be non-negative from output layer onwards. 

Without loss of generality, the above proof also extends to problems in which $f(z)$ does not appear explicitly in the objective, provided that minimising the objective implicitly minimises $f(z)$; that is, minimising $c(z)$ implies that $f(z)$ is minimised over the feasible region. Therefore, a ReLU DNN can be fully replaced by an LP reformulation within an optimisation problem, with guaranteed tightness, provided that two conditions are satisfied: (1) the network output is minimised, either explicitly through the objective function or implicitly due to the optimisation structure, and (2) the weight matrices of all layers except the first ReLU layer are non-negative. 
\end{proof}
\begin{definition}
A ReLU DNN $f$, as defined in Definition~\ref{def: ReLU DNN} with a single output (i.e., $n_{L+1} = 1$), is called a cvxd ReLU DNN if all weight matrices from the second layer onwards are element-wise non-negative, i.e., $\boldsymbol{W}_{l} \geq 0$ (element-wise) for all $l \in \{2, \dots, L+1\}$.
\end{definition}
Clearly, a cvxd ReLU DNN can be equivalently represented through an LP reformulation, leveraging the results established in Theorem~\ref{theorem: A1}. Weight-constrained ReLU DNNs have been extensively studied in the literature for applications such as finance and healthcare, particularly for their monotonicity properties. However, these constraints affect the expressiveness of ReLU DNNs, limiting them to learning and accurately representing only convex and monotonic functions, as demonstrated in~\cite{constrained_nn}. Thus, cvxd ReLU DNNs can accurately learn convex and monotonically increasing functions. 

This limitation poses a caveat when applying this LP reformulation, as it trades accuracy and expressiveness for tractability and tightness, as demonstrated by our results in Section~\ref{sec: Simulation study and results}. Despite this limitation, the results in Section~\ref{sec: Simulation study and results} demonstrate that cvxd ReLU DNNs can effectively capture nonlinear and nonconvex monotonically increasing flexibility cost functions for the aggregator's operation. Consequently, cvxd ReLU DNNs may be suitable for learning nonlinear fuel cost functions, CO\textsubscript{2} emissions, or cumulative distribution functions for chance-constrained programs, which inherently increase monotonically. 
\subsection{SOTA ReLU DNN reformulations}
\label{ssec: SOTA}
\begin{figure}[htbp]
    \centering
    \begin{subfigure}[b]{0.35\linewidth}
        \centering
    \includegraphics[trim=0 280 0 290, clip, width=\linewidth]{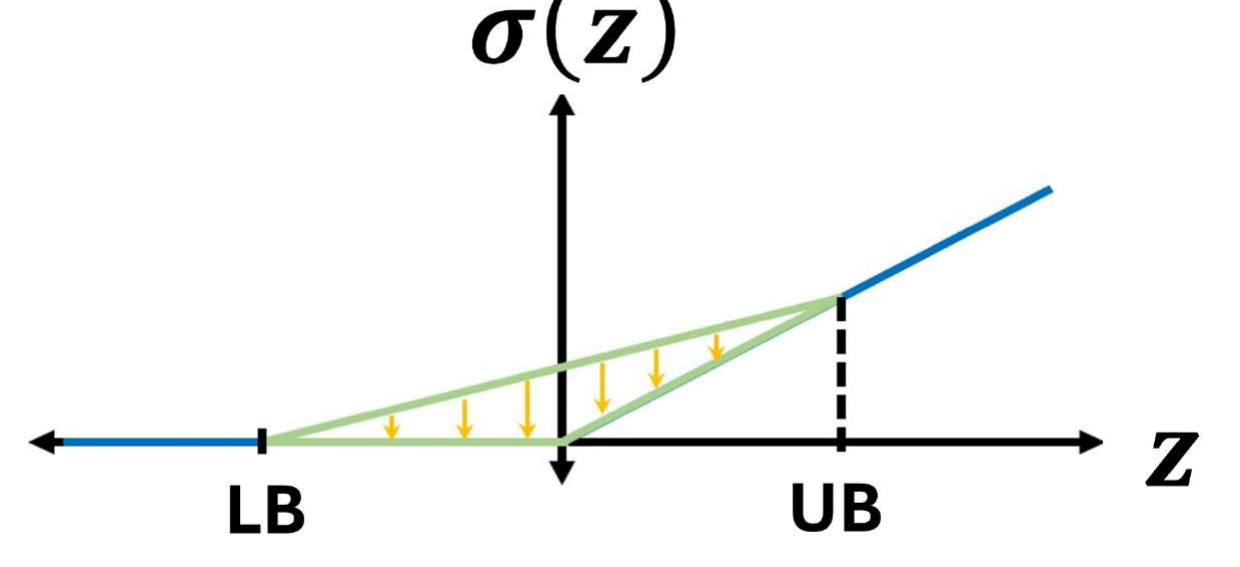}
        \caption{PCTAR reformulation}
        \label{fig: PCTAR}
    \end{subfigure}
    \hfill
    \begin{subfigure}[b]{0.35\linewidth}
        \centering
        \includegraphics[trim=0 260 0 290, clip, width=\linewidth]{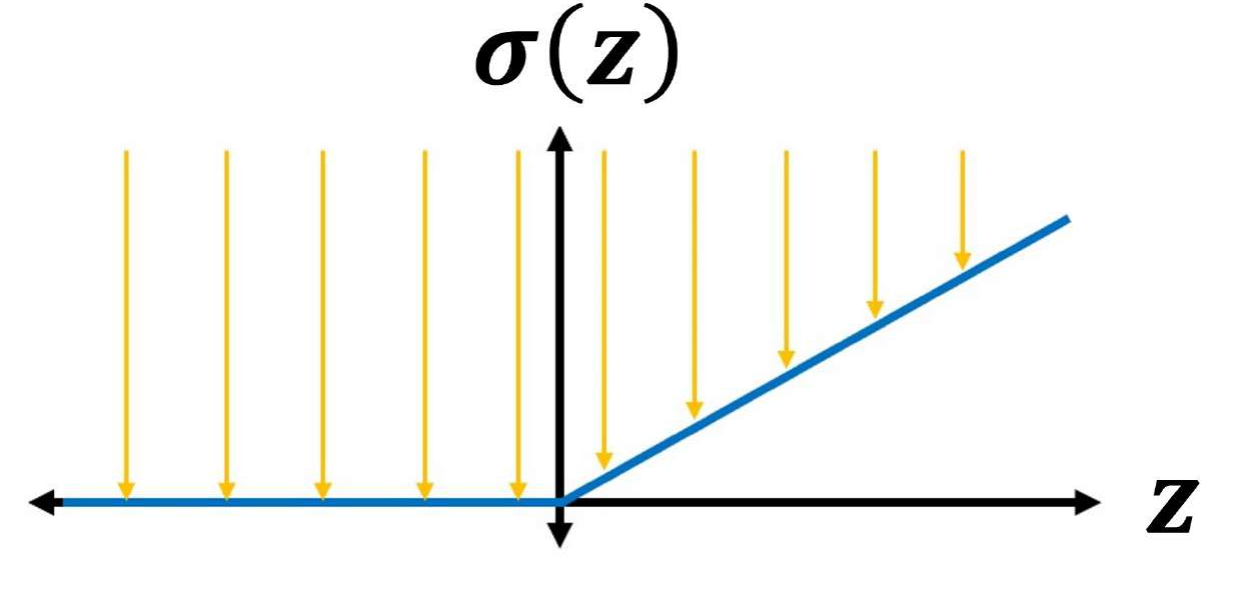}
        \caption{PCAR reformulation}
        \label{fig: PCAR}
    \end{subfigure}
    \caption{ReLU function feasible space for LP reformulations adapted from~\cite{zhang2023learning}}
    \label{fig: PCAR and PCTAR}
\end{figure}
The authors in~\cite{zhang2023learning} proposed two LP reformulations for ReLU activation functions: PCAR and PCTAR. Both methods are based on the same core idea as our proposed ReLU DNN reformulation, namely that a ReLU neuron can be linearised if its output is minimised within the optimisation problem.

In our approach, we leverage this principle by first verifying whether the optimisation objective involves minimising the output of the ReLU DNN. If so, we propagate this minimisation behaviour throughout the entire network by constraining all weight matrices of the ReLU DNN to be non-negative. This structural enforcement ensures that the outputs of all intermediate neurons are also minimised.

In contrast, PCAR and PCTAR do not impose any structural constraints on the DNN weights $\boldsymbol{W_{l}}$ for $l \in {1, \dots, L}$. Instead, they introduce penalty functions directly within the optimisation model to implicitly encourage the minimisation of neuron outputs. This makes them broadly applicable to generic ReLU DNNs with arbitrary weights. PCTAR is a further relaxation of PCAR, where the feasible space of each ReLU neuron is restricted to a triangular region, as illustrated in Fig.~\ref{fig: PCAR and PCTAR}, to improve tractability.

In the following, we briefly describe the mathematical formulations of PCAR and PCTAR under the assumption that the output of the ReLU DNN is minimised in the optimisation problem. For more general formulations, we refer the reader to the original work~\cite{zhang2023learning}. We retain the notation introduced in Section~\ref{ssec: ReLU DNN} with $n_{L+1}=1$, and define $\boldsymbol{\alpha_{l}} \in \mathbb{R}^{n_l}$ as the vector of penalty coefficients for layer $l$, and $LB, UB \in \mathbb{R}$ as the lower and upper bounds of each ReLU neuron, respectively. The formulations for PCAR and PCTAR are specified in Eqs.~\eqref{eq: PCAR obj}--\eqref{eq: PCAR end} and Eqs.~\eqref{eq: PCTAR obj}--\eqref{eq: PCTAR end}. 
\begin{subequations}
    \label{eq: PCAR reformulation}
    \begin{align}
    \label{eq: PCAR obj}
    \min_{\boldsymbol{z} \in \mathbb{R}^{n_0}} & c(\boldsymbol{z}) + f(\boldsymbol{z}) + \sum_{l=1}^{L} \boldsymbol{\alpha_{l}}^{\text{T}}\boldsymbol{h_{l}}\\
    \text{s.t.}\,& g(\boldsymbol{z}) \leq 0 \\
    &  f(\boldsymbol{z}) = \boldsymbol{W_{L+1}} \boldsymbol{h_{L}} + \boldsymbol{b_{L+1}} \\
    & \boldsymbol{h_{0}} = \boldsymbol{z} \\
    & \boldsymbol{h_{l}} \geq \boldsymbol{W_{l}} \boldsymbol{h_{l - 1}} + \boldsymbol{b_{l}} & \forall l \in \{1, \dots, L\} \\
    \label{eq: PCAR end}
    & \boldsymbol{h_{l}} \geq 0 & \forall l \in \{1, \dots, L\}
    \end{align}
\end{subequations}
\begin{subequations}
    \label{eq: PCTAR reformulation}
    \begin{align}
    \label{eq: PCTAR obj}
    \min_{\boldsymbol{z} \in \mathbb{R}^{n_0}} & c(\boldsymbol{z}) + f(\boldsymbol{z}) + \sum_{l=1}^{L} \boldsymbol{\alpha_{l}}^{\text{T}}\boldsymbol{h_{l}}\\
    \text{s.t.}\,& g(\boldsymbol{z}) \leq 0 \\
    &  f(\boldsymbol{z}) = \boldsymbol{W_{L+1}} \boldsymbol{h_{L}} + \boldsymbol{b_{L+1}} \\
    & \boldsymbol{h_{0}} = \boldsymbol{z} \\
    & \boldsymbol{h_{l}} \geq \boldsymbol{W_{l}} \boldsymbol{h_{l - 1}} + \boldsymbol{b_{l}} & \forall l \in \{1, \dots, L\} \\
    \label{eq: PCTAR triangle}
    & \boldsymbol{h_{l}} \leq k_{1}\left[\boldsymbol{W_{l}} \boldsymbol{h_{l - 1}} + \boldsymbol{b_{l}}\right] + k_{2}& \forall l \in \{1, \dots, L\} \\
    \label{eq: PCTAR end}
    & \boldsymbol{h_{l}} \geq 0 & \forall l \in \{1, \dots, L\}
    \end{align}
\end{subequations}

From the objective functions of both reformulations, Eqs.~\eqref{eq: PCAR obj} and \eqref{eq: PCTAR obj}, additional penalty terms can be observed compared with our proposed reformulation. In the case of PCTAR, an additional constraint~\eqref{eq: PCTAR triangle} is introduced to further restrict the feasible space of each ReLU neuron to a triangular region, as illustrated in Fig.~\ref{fig: PCTAR}. This constraint is defined by the parameters $k_1 = \frac{UB}{UB - LB}$ and $k_2 = \frac{UB \cdot LB}{UB - LB}$, and aims to improve the tractability by relaxing the feasible space.
\section{Case study: Energy aggregator bidding optimisation}
\label{sec: case study}
In this section, we present the nonlinear optimisation model used to evaluate the effectiveness of the proposed ReLU DNN reformulation relative to SOTA alternatives. We also outline the training strategy employed for the ReLU DNN surrogate models.
\subsection{Optimisation model}
\label{ssec: optimisation model}
The optimisation case study in this work is an energy aggregator’s optimal bidding problem for participation in day-ahead electricity scheduling markets. The formulation is adapted from the nonlinear deterministic framework in~\cite{Yogesh_ISGT}, with minor modifications; readers are referred to the original study for additional background and assumptions. A relatively simple case study is intentionally chosen to ensure tractability across all surrogate models and allow for a clear comparison amongst them.

Energy aggregators act as intermediaries between prosumers and wholesale electricity markets, coordinating flexible demand and distributed energy resources into a consolidated, tradable service. By incentivising prosumers to shift their consumption or adjust their import and export behaviour, aggregators unlock small-scale flexibility that would otherwise remain inaccessible. This improves market participation, improves the use of distributed assets, and supports overall system reliability. A central requirement in aggregator modelling is capturing prosumer price responsiveness, defined as the amount of consumption shifted in response to an incentive signal. This behaviour is highly nonlinear and nonsmooth~\cite{muf}, and embedding it directly into optimisation models can lead to computational intractability. Surrogate techniques such as piecewise linearisation (PWL)~\cite{Yogesh_ISGT} are therefore commonly used to retain tractability. In this study, we evaluate the proposed ReLU DNN reformulation as an alternative surrogate to represent this complex relationship.

Let $\mathcal{T} = \{1, \dots, T\}$ denote the set of time intervals over the horizon $T \in \mathbb{N}$. The market price profile is $\boldsymbol{\lambda}^{\text{M}} \in \mathbb{R}^{T}$ (DKK/MWh). The aggregator’s flexibility bid is $\boldsymbol{x} \in \mathbb{R}_{+}^{T}$ (MWh), bounded by the maximum flexibility available $\boldsymbol{\overline{x}} \in \mathbb{R}_{+}^{T}$. Due to rebound effects, this limit is adjusted to $\boldsymbol{\tilde{x}} \in \mathbb{R}_{+}^{T}$. The incentive offered to the prosumer group is given by $\boldsymbol{\lambda}^{\text{I}} \in \mathbb{R}_{+}^{T}$ (DKK/MWh).

We adopt the prosumer responsiveness function proposed in~\cite{Yogesh_ISGT}, shown in Eq.~\eqref{eq: Prosumer responsiveness A}, which models a saturation curve: prosumers provide little flexibility at low incentives, and their response gradually saturates to the updated maximum achievable flexibility $\tilde{x}_{t}$ at high incentives. Unlike~\cite{Yogesh_ISGT}, we replace $\boldsymbol{\overline{x}}$ with $\boldsymbol{\tilde{x}}$, which yields a more realistic representation, since rebound effects reduce future flexibility and therefore increase the incentive required to induce further shifting. As shown in Eq.~\eqref{eq: Prosumer responsiveness B}, the incentive offered to prosumers depends on the decision variables $\boldsymbol{x}$ and $\boldsymbol{\tilde{x}}$, along with the shaping parameters $q_{t}, r_{t} \in \mathbb{R}_{+}$ that capture exogenous factors for each $t \in \mathcal{T}$. 

\allowdisplaybreaks
\begin{subequations}
    \begin{align} 
        \label{eq: Prosumer responsiveness A}
        & x_{t} = \frac{\tilde{x}_{t}}{1+e^{\left(q_{t} - r_{t}\lambda^{\text{I}}_{t}\right)}} & \forall t \in \mathcal{T}
        \\
        \label{eq: Prosumer responsiveness B}
        \Rightarrow\, &\lambda^{\text{I}}_{t} = \frac{1}{r_{t}}\left[q_{t} - \ln\left(\frac{\tilde{x}_{t}}{x_{t}} - 1\right)\right] &\forall t \in \mathcal{T}
    \end{align}
\end{subequations}

The aggregator's profit maximisation problem is formulated in Eqs.~\eqref{eq: objective value}--\eqref{eq: max flexibility constraint}, following the model in~\cite{Yogesh_ISGT}. Equation~\eqref{eq: objective value} defines the objective as maximising the net profit of the aggregator. Constraint~\eqref{eq: max flexibility constraint} requires that the scheduled flexibility does not exceed the updated maximum available flexibility. The rebound effect is captured through the cross-temporal matrix $\boldsymbol{A} \in [0,1]^{T \times T}$ in Eq.~\eqref{eq: updated up max flexibility constraint}. This matrix is lower triangular with zero diagonal elements, ensuring that flexibility activation at interval $t$ reduces the available flexibility in future intervals.

\begin{subequations}
    \begin{align}
    \label{eq: objective value}
    \max_{\boldsymbol{x}, \boldsymbol{\tilde{x}}, \boldsymbol{\lambda^{\text{P}}}} \quad & \sum_{t \in \mathcal{T}} \left[\lambda_{t}^{\text{M}} x_{t} - \lambda_{t}^{\text{P}} \right] & \quad
    \\
    \label{eq: updated up max flexibility constraint}
    \text{s.t.} \quad & \tilde{x}_{t} = \overline{x}_{t} - \sum_{j \in \mathcal{T}} A_{tj} x_{t} & \, \forall t \in \mathcal{T}
    \\
    \label{eq: total purchase cost}    
    & \lambda_{t}^{\text{P}} = \lambda_{t}^{\text{I}} x_{t} = \frac{x_{t}}{r_{t}}\left[q_{t} - \ln\left(\frac{\tilde{x}_{t}}{x_{t}} - 1\right) \right]& \, \forall t \in \mathcal{T}
    \\
    \label{eq: max flexibility constraint}
    & x_{t} \leq \tilde{x}_{t} & \, \forall t \in \mathcal{T}
    \end{align}
\end{subequations}

Equation~\eqref{eq: total purchase cost} represents the total purchase cost of the flexibility procured from prosumers. This function is nonlinear and nonconvex, posing computational challenges for the profit maximisation problem. To address this, we replace it with a tractable surrogate model $\tilde{f}\left(x_{t}, \tilde{x}_{t}, q_{t}, r_{t}\right)$ that approximates the original cost. When $\tilde{f}(\cdot)$ is derived using a machine learning technique, the optimisation model falls within the OCL framework. Given that the variable of interest $\boldsymbol{\lambda^{\text{{P}}}}$ is being minimised, this problem is a good candidate for our proposed ReLU DNN reformulation. 

In this study, we consider two classes of surrogate models: PWL~\cite{PWL} and ReLU DNN-based approaches. For the latter, three reformulations are evaluated: (i) the standard MIP formulation via the Gurobi ML Python package~\cite{Gurobi_ML}, (ii) the PCAR and PCTAR LP relaxation as outlined in Section~\ref{ssec: SOTA}, and (iii) our proposed reformulation as outlined in Section~\ref{ssec: ReLU DNN}. 

PWL remains a common choice for approximating nonlinear functions but becomes computationally expensive as the input dimension or number of pieces $N_P$ increases~\cite{GRIMSTAD}. Thus, we apply it only as a baseline by simplifying Eq.~\eqref{eq: total purchase cost} to a two-dimensional form, assuming deterministic exogenous factors $q_t$ and $r_t$. In contrast, ReLU DNNs can capture multivariate nonlinear relationships more naturally; hence, the purchase cost is modelled as a four-dimensional function for DNN training.
\subsection{ReLU DNN training strategy}
\label{ssec: training strategy}
In this section, we describe the training strategy for ReLU DNNs to learn the nonlinear and nonconvex relationship presented in Eq.~\eqref{eq: total purchase cost}. We note here that despite its nonconvex nature, our aim is to demonstrate that cvxd ReLU DNNs may sufficiently capture the relationship to obtain reasonable quality solutions. The defined function has four inputs and one output and is invalid if $x_{t} \geq \tilde{x}_{t}$. It should be noted that two sets of ReLU DNNs are trained using the strategy described below: 1) \textbf{cvxd ReLU DNN}, which is implemented using the proposed formulation as specified in Section~\ref{ssec: ReLU DNN}; 2) \textbf{Unconstrained (UC) ReLU DNN}, which is implemented using the other formulations.

Let $\mathcal{D} \in \mathbb{R}_{+}^{N \times 4}$ denote the input dataset with $N_{D} \in \mathbb{N}$ data points. These points are generated randomly such that an input point $\boldsymbol{z} = \{x_{t}, \tilde{x}_{t}, q_{t}, r_{t}\}$ satisfies $\boldsymbol{\underline{z}} \leq \boldsymbol{z} \leq \boldsymbol{\overline{z}}$ and $z_{1} < z_{2}$. Here, $\boldsymbol{\underline{z}}$ and $\boldsymbol{\overline{z}}$ are the lower and upper bounds of $\boldsymbol{z}$. These are then substituted in Eq.~\eqref{eq: total purchase cost} to obtain the corresponding output dataset $\mathcal{O} \in \mathbb{R}_{+}^{N}$, so the combined dataset is denoted as $\{\mathcal{D}, \mathcal{O}\}$.

To accelerate the training process, we applied min-max normalisation to the combined dataset, normalising the input and output values to a range of $[0, 1]$. The resulting normalised combined input and output datasets, denoted as $\{\mathcal{D}, \mathcal{O}\}^{'}$, were then randomly divided into training and validation sets in an 80:20 ratio. We trained the ReLU DNNs using the Adam optimiser with a learning rate of 0.0001 while applying the mean squared error loss metric. The models were trained for 1000 epochs with a batch size of 1000. The validation set was used to monitor the training process and prevent overfitting.
\section{Simulation study and results}
\label{sec: Simulation study and results}
In this section, we describe the simulation studies implemented to evaluate the performance of the proposed reformulation in Section~\ref{ssec: ReLU DNN} compared to SOTA methods. 
\subsection{Optimisation parameters}
\label{ssec: optimisation parameters}
In this section, we describe the market and prosumer cluster parameters used within the optimisation model. We assume that the aggregator participates in the European manual frequency restoration reserve (mFRR) up-regulation capacity market, specifically in the DK1 region of Denmark, which is operated by Energinet. Up-regulation denotes increasing the amount of generation injected into the grid. This market is operated one day in advance and reserves tertiary FCAS for every hour the next day, hence $T=24$ hours for the optimisation horizon. It is compatible with the day-ahead formulation presented in Section~\ref{ssec: optimisation model}. 

\begin{figure}[htbp]
    \centering
    \begin{subfigure}[b]{0.80\linewidth}
    \centering
    \includegraphics[width=\linewidth]{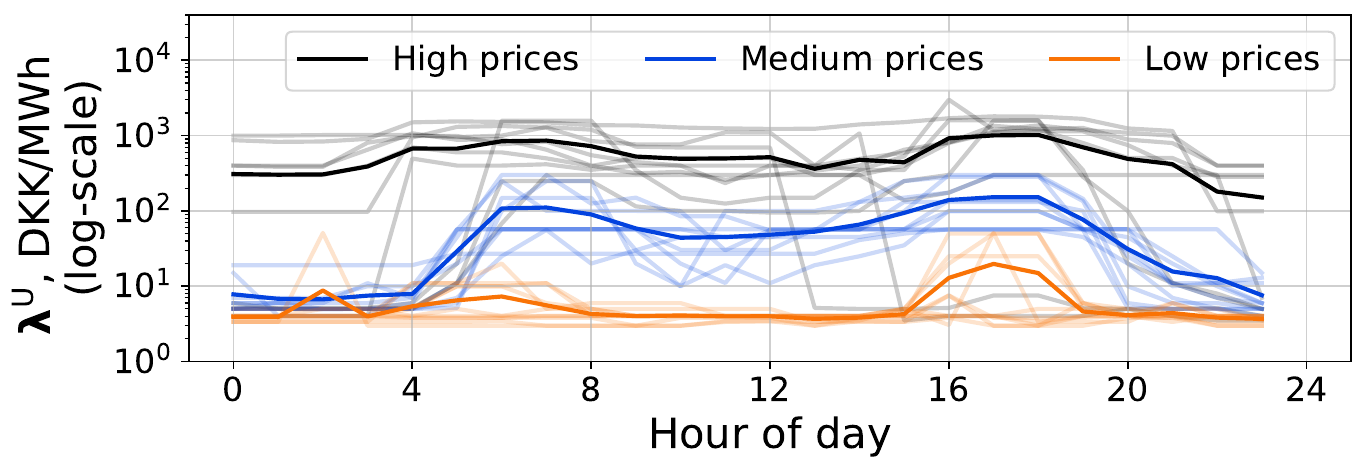}
    \subcaption{mFRR capacity market price by category}
    \label{fig: Average price scenarios}
    \end{subfigure}
    \hfill
    \begin{subfigure}[b]{0.80\linewidth}
        \centering
        \includegraphics[width=\linewidth]{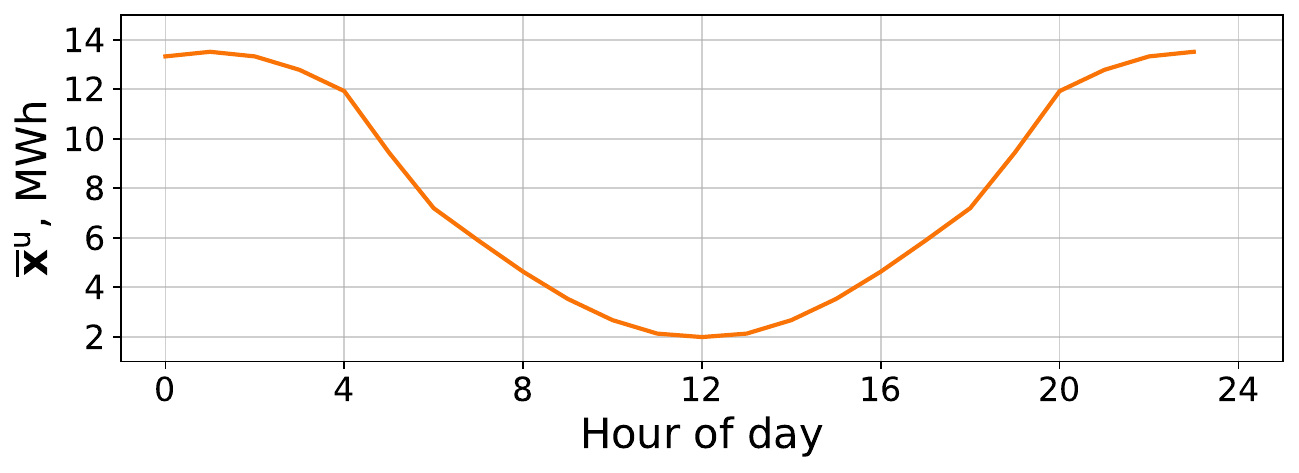}
        \subcaption{Maximum up-regulation flexibility obtainable from the prosumer cluster}
        \label{fig: max flexibility}
    \end{subfigure}
    \caption{Optimisation problem parameters}
    \label{fig: Market parameters}
\end{figure}

Real mFRR capacity market price scenarios were obtained using the Energinet data service API~\cite{energi_data_service}. The scenarios were grouped into low-, medium-, and high-price categories, and ten representative scenarios were selected from each group. This enables us to analyse how market price levels affect the tractability and performance of the surrogate models for $\boldsymbol{\lambda^{\text{P}}}$. The selected price profiles are illustrated in Fig.~\ref{fig: Market parameters}(a), with the mean profile highlighted. Notably, around 90\% of the prices are below 10~DKK/MWh, making low-price scenarios the most likely and therefore the most critical. However, when prices are high, they are orders of magnitude higher, as shown by the logarithmic scale in Fig.~\ref{fig: Market parameters}(a). The mFRR capacity market is therefore well-suited for this study due to its high price variability, which allows us to examine surrogate-model performance under different input magnitudes.

The data of the prosumer cluster was generated randomly based on the work done in~\cite{Yogesh_ISGT}. The profile of the maximum achievable flexibility ($\boldsymbol{\overline{x}}$) used in this study is presented in Fig.~\ref{fig: Market parameters}(b). Note that this profile follows a sinusoidal pattern to account for diurnal factors, as demonstrated by the work done in~\cite{Balint2019}. The cross-temporal matrices associated with bid flexibility and updated maximum flexibility ($\boldsymbol{A}$) are a randomly generated lower triangular matrix with diagonal elements equal to zero. The price elasticity of prosumers can be used as a proxy for $\boldsymbol{A}$, which can be obtained using real-world data, as shown by~\cite{MILLER2016235}. Furthermore, the total flexibility lost in future intervals must be less than or equal to the flexibility reserved in the current interval $t$, that is, $\sum_{i \in \mathcal{T}} A_{i, t} \leq 1, \forall t \in \mathcal{T}$.

For the prosumer responsiveness model, the shaping parameters $q_{t}, r_{t}$ were randomly generated and fixed for each hour $t \in \mathcal{T}$. These parameters can be derived by fitting a saturation curve to real-world data using regression~\cite{sigmoid}. For the two-dimensional PWL model~\cite{PWL}, we set the number of pieces to $N_{P} = 4$ for both inputs. 

For the PCTAR method, the bounds for the neurons were set as $LB=-10, UB=10$ to account for potential outliers, given that training with min-max normalisation typically keeps neuron inputs and outputs within $\pm 1$. To train the ReLU DNNs, $N_{D}=3\times10^{5}$ data points were generated following the methodology in Section~\ref{ssec: training strategy}. We note that the average CPU training time for all the ReLU DNN architectures used in this case study was 587 seconds. 
\subsection{Tractaility, accuracy and solution quality analysis}
\label{ssec: Tractability analysis}
In this section, we evaluate the tractability, accuracy, and solution quality achieved using different $\boldsymbol{\lambda^{\text{P}}}$ surrogate models, based on the mathematical models presented in Section~\ref{ssec: optimisation model}. For this, we trained UC and cvxd ReLU DNNs with three hidden layers consisting of 10, 20, and 10 neurons, respectively, as discussed in Section~\ref{ssec: training strategy}. These trained neural networks are used to represent $\lambda_{t}^{\text{P}}, \forall t \in \mathcal{T}$ within our optimisation model using the Gurobi ML package~\cite{Gurobi_ML}, PCTAR, PCAR, and the proposed reformulations. In addition, we employed the PWL method as a benchmark for evaluating these approaches based on neural networks. The formulated optimisation models were then solved using Gurobi\textsuperscript{\textregistered} 11.0~\cite{Gurobi} on a PC with 16GB RAM and an Intel\textsuperscript{\textregistered} i7 16-core processor. We set the stopping criteria to be 3600 seconds or MIP Gap of 1\%, whichever came first. 

Since PCAR and PCTAR are penalty-driven approaches, we tested various penalty values during optimisation, as detailed in Section~\ref{ssec: Penalty sensitivity analysis}. For comparison, we selected the penalty value that maximised the realised profit and minimised the runtime for each price scenario. 
\subsubsection{Computational performance}
\label{sssec: computational performance}
\begin{figure}[!htpb]
    \centering
    \includegraphics[width=0.80\linewidth]{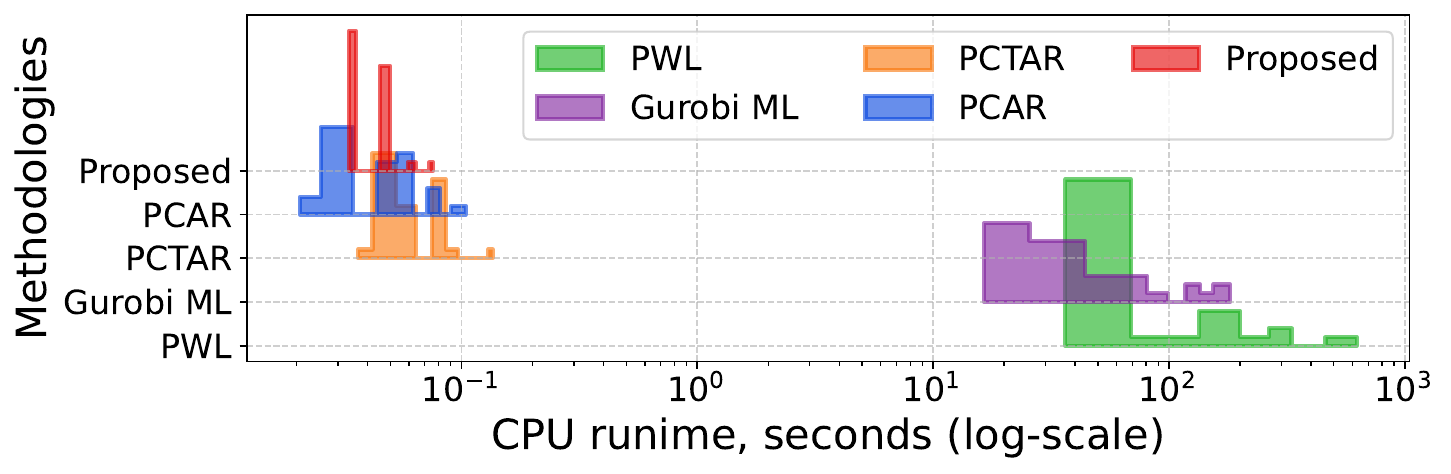}
    \label{fig: cpu runtime histogram}
    \caption{CPU runtime histogram for all surrogate models across all simulated scenarios}
    \label{fig: Tractability and accuracy results}
\end{figure}

Figure~\ref{fig: Tractability and accuracy results} shows a histogram of the CPU runtime for all price scenarios solved using various surrogate models. CPU runtimes provide insight into the amount of computer resources being used, which in turn affects the operational costs for the aggregator. Please note that the x-axis of this plot is on a log scale. As expected, we can see that the LP reformulations PCTAR, PCAR, and the proposed reformulation significantly outperform the PWL and Gurobi ML reformulations in terms of speed and CPU resource usage, by a factor ranging from hundreds to thousands. This is because parallel processing is used by Gurobi\textsuperscript{\textregistered} for branch-and-bound algorithms to solve MIPs, which requires more computational effort. 
\begin{table}[H]
\centering
\caption{Tractability analysis under different price scenarios}
\renewcommand{\arraystretch}{1.1}  
\resizebox{0.5\linewidth}{!}{%
\begin{tabular}{cccccc}
\toprule
 Price & Surrogate & Mean realised & Mean run- & Mean MIP\\
case & model & profit DKK & time & gap \%  \\
\midrule
\multirow[m]{5}{*}{High} & PWL & 50402 & 1 s & 1 \\
 & Gurobi ML & 50253 & 3 s & 1 \\
 & PCTAR & \textcolor{blue}{50861} & 44 ms & 0 \\
 & PCAR & \textcolor{blue}{50861} & \textcolor{blue}{26 ms} & 0 \\
 & Proposed & 50670 (\textcolor{red}{0.4\%}) & 34 ms & 0 \\
\cline{1-5}
\addlinespace
\multirow[m]{5}{*}{Medium} & PWL & \textcolor{blue}{3910} & 6 s & 1 \\
 & Gurobi ML & 3678 & 8 s & 1 \\
 & PCTAR & 3870 & 44 ms & 0 \\
 & PCAR & 3870 & \textcolor{blue}{31 ms} & 0 \\
 & Proposed & 3894 (\textcolor{red}{0.4\%}) & \textcolor{blue}{31 ms} & 0\\
\cline{1-5}
\addlinespace
\multirow[m]{5}{*}{Low} & PWL & \textcolor{blue}{229} & 29 s & 0 \\
 & Gurobi ML & 171 & 15 s & 1 \\
 & PCTAR & -514 & 65 ms & 0 \\
 & PCAR & -514 & 48 ms & 0 \\
 & Proposed & 226 (\textcolor{red}{1.3\%})& \textcolor{blue}{35 ms} & 0 \\
\bottomrule
\end{tabular}
}
\label{tab: Tractability analysis}
\end{table}
From Table~\ref{tab: Tractability analysis}, we note that all models are relatively fast in the high price scenarios, followed by the medium price scenarios. This can be explained in Fig.~\ref{fig: Market parameters}(a), where the high and medium price scenarios are orders of magnitude larger and more volatile than the low price cases. Thus, the decision for the optimiser is fairly straightforward, as it is easier for the branch-and-bound algorithm to reject sub-optimal solutions. Conversely, in the case of low-price scenarios, the purchase cost and mFRR prices are most likely comparable making it harder for the optimiser to identify and reject sub-optimal solutions.
\subsubsection{Solution quality and surrogate model accuracy analysis}
\label{sssec: Bidding behaviour}

\begin{figure}
        \centering
        \includegraphics[width=0.80\linewidth]{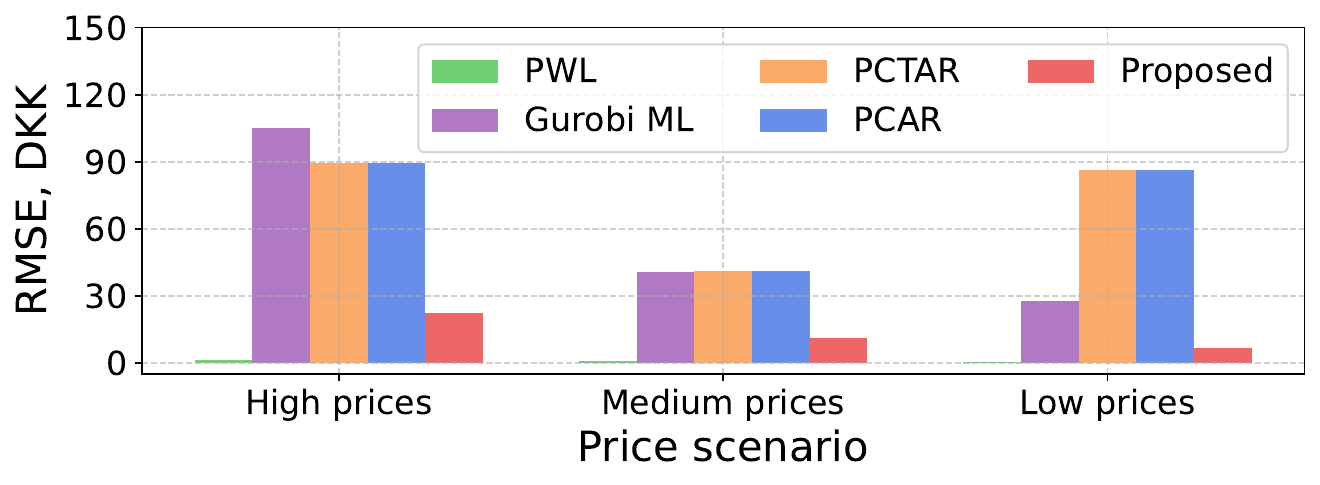}
        \caption{RMSE between actual and surrogate model $\boldsymbol{\lambda^{\text{P}}}$ across different price scenarios}
        \label{fig: rmse price scenarios}
\end{figure}

\begin{figure}[htpb]
    \centering
    \begin{subfigure}[b]{0.80\linewidth}
    \centering
    \includegraphics[width=\linewidth]{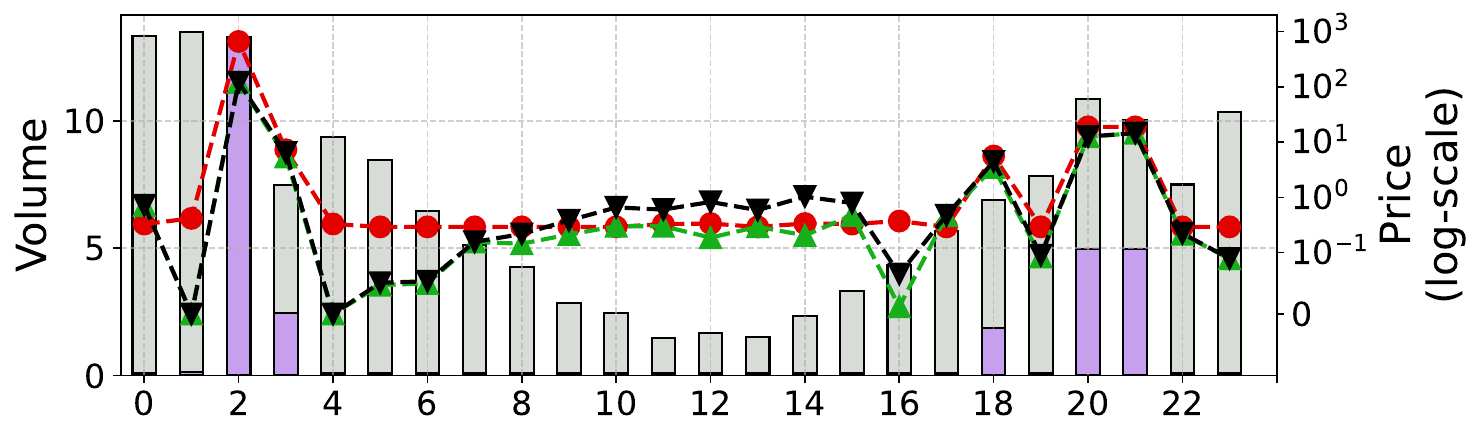}
    \subcaption{PWL}
    \label{fig: bidding pwl}
    \end{subfigure}
    \hfill
    \begin{subfigure}[b]{0.80\linewidth}
        \centering
        \includegraphics[width=\linewidth]{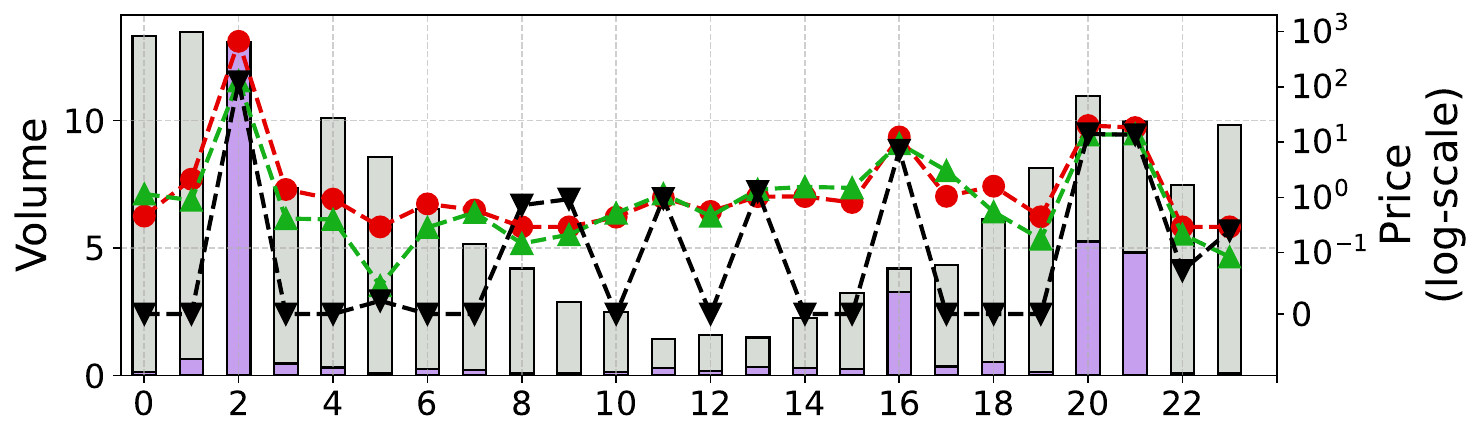}
        \subcaption{Gurobi ML}
        \label{fig: bidding gurobiml}
    \end{subfigure}
    \hfill
        \begin{subfigure}[b]{0.80\linewidth}
        \centering
        \includegraphics[width=\linewidth]{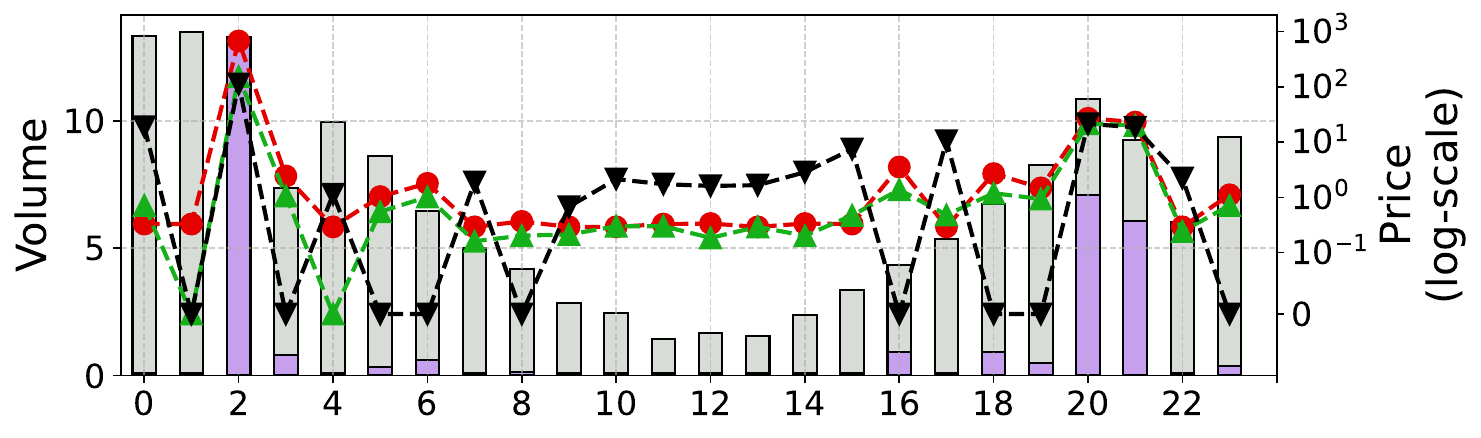}
        \subcaption{Proposed}
        \label{fig: bidding nwlp}
    \end{subfigure}
    \hfill
        \begin{subfigure}[b]{0.80\linewidth}
        \centering
        \includegraphics[width=\linewidth]{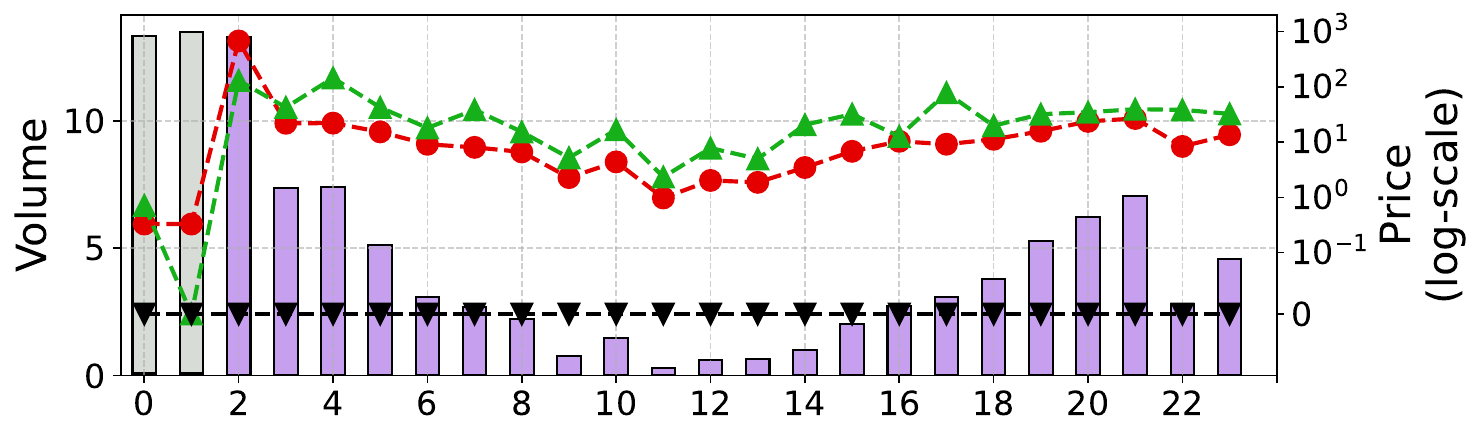}
        \subcaption{PCTAR}
        \label{fig: bidding pctar}
    \end{subfigure}
    \hfill
        \begin{subfigure}[b]{0.80\linewidth}
        \centering
        \includegraphics[width=\linewidth]{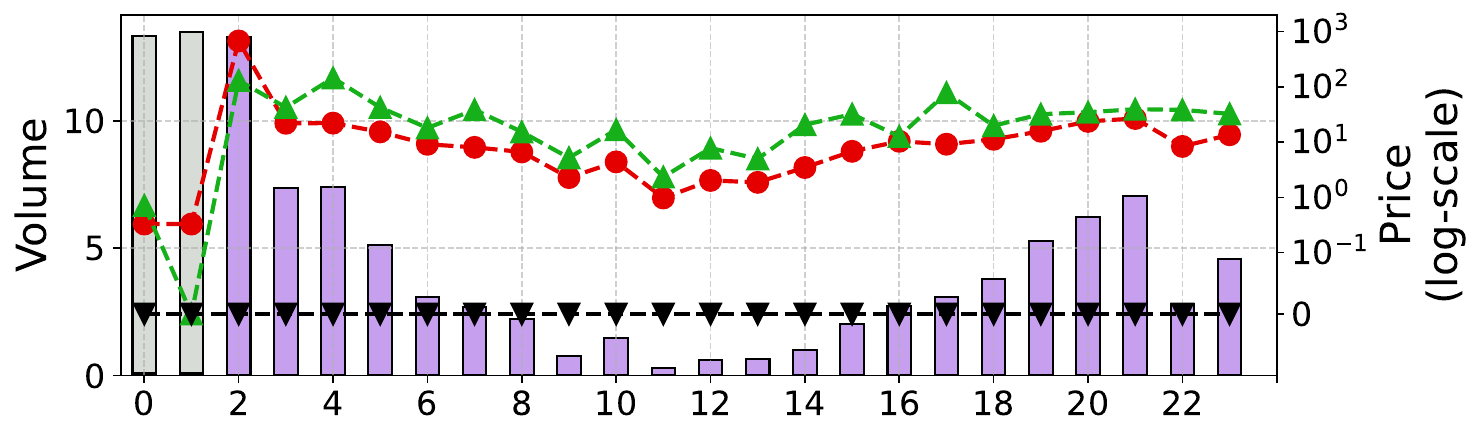}
        \subcaption{PCAR}
        \label{fig: bidding pcar}
        \hfill
    \end{subfigure}
        \begin{subfigure}[b]{\linewidth}
    \centering
    \includegraphics[width=\linewidth, trim=0 0 0 0, clip]{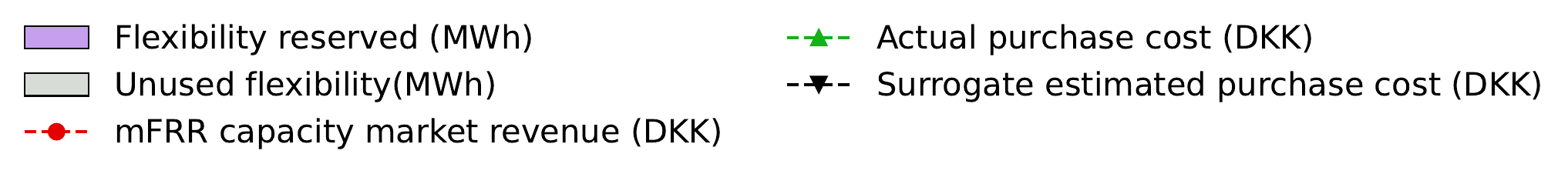}
    \label{fig: shared legend}
    \end{subfigure}
    \caption{Optimisation results for one low price instance}
    \label{fig: Bidding behaviour}
\end{figure}

In Table~\ref{tab: Tractability analysis}, the realised profit is calculated by substituting the optimal decision vectors $\boldsymbol{x}$ and $\boldsymbol{\tilde{x}}$ into Eqs.~\eqref{eq: objective value} and ~\eqref{eq: total purchase cost}, reflecting the true profit based on the actual post-optimisation purchase cost of flexibility. PCTAR and PCAR yield the highest average profit in high price scenarios, while PWL performs best under medium and low prices. As shown in Fig.~\ref{fig: rmse price scenarios}, PWL also delivers the most accurate cost estimates, which is expected given its ability to closely approximate nonlinear functions with enough segments. This is further substantiated by Fig.~\ref{fig: Bidding behaviour}(a), where the estimated cost of the up-regulation flexibility is very close to the actual cost. However, as discussed in Section~\ref{sssec: computational performance}, the PWL method is significantly more computationally expensive than our proposed method, and there also exist practical limitations, as discussed in Section~\ref{ssec: key remarks}, which may make it unsuitable in the real world.

Gurobi ML employs an MIP-based reformulation of the ReLU DNN, which is expected to provide a tight embedding of the network within the optimisation model. Notably, it uses the UC ReLU DNN to learn the $\boldsymbol{\lambda^{\textbf{P}}}$ relationship, which achieved a lower training root mean square error (RMSE) compared to the cvxd ReLU DNN used in our proposed formulation, as demonstrated using Figs.~\ref{fig: rmse length and width}(a) and (b). Therefore, it was reasonable to expect Gurobi ML to exhibit superior realised RMSE. However, it deviated from this expected trend, underperforming our model in terms of realised accuracy. This deviation was also observed across other network architectures, as discussed in Section~\ref{ssec: Length and width analysis}. The exact reason remains unclear and may be attributed to the proprietary nature of Gurobi\textsuperscript{\textregistered} solution algorithms.

PCAR and PCTAR often lead to losses in low price scenarios, both the most probable and critical, due to substantial mis-estimation of the purchase cost of the up-regulation flexibility. This is reflected in the high RMSE values in Fig.~\ref{fig: rmse price scenarios}. While these approaches perform adequately in high- and medium-price settings, they struggle when market prices are low. Figures~\ref{fig: Bidding behaviour}(d) and (e) illustrate this issue for a representative low-price scenario. Despite bidding maximum flexibility at most intervals, PCAR and PCTAR estimate the purchase cost as $0$ DKK, which leads to actual costs exceeding revenues and resulting in losses.
This behaviour stems from the inability of penalty functions to consistently enforce the minimisation of all the neurons output. In many cases, the optimiser can decouple intermediate neurons from the ReLU DNN feed-forward equations, setting them freely to force low estimated costs while bidding maximum flexibility. As a result, the trained ReLU DNN's behaviour cannot always be reliably replicated, an issue further discussed in Section~\ref{ssec: Length and width analysis}.

Our proposed model consistently achieves the second highest realised profit in all scenarios, remaining within approximately 1\% of the top-performing method, as indicated by the percentage differences in parentheses in Table~\ref{tab: Tractability analysis}. It also ranks second in realised RMSE as shown in Fig.~\ref{fig: rmse price scenarios}, outperforming Gurobi ML as noted above. Although our formulation employs cvxd ReLU DNNs to learn $\boldsymbol{\lambda^{\textbf{P}}}$, architectures that are typically more suited to capture convex nonlinearities still perform well in this nonconvex setting, providing solid accuracy and solution quality at a fraction of computational speed.

It is interesting to note from Figs.~\ref{fig: Bidding behaviour}(b) and (c) that both Gurobi ML and the proposed method are able to reasonably estimate the up-regulation flexibility cost when the reserved up-regulation flexibility volume is relatively high. However, for lower volumes, both methods tend to underestimate or overestimate the cost. This discrepancy may stem from the underlying ReLU DNNs struggling to accurately learn the cost relationship at lower volume levels. Consequently, the training performance of the embedded ReLU DNNs plays a significant role in the ultimate bidding behaviour and effectiveness of these approaches. This highlights the broader challenge of integrating learned models within optimisation frameworks, especially under data regimes where performance generalisation is critical.
\subsection{ReLU DNN length and width sensitivity analysis}
\label{ssec: Length and width analysis}
In this section, we evaluate the tractability, accuracy and tightness of the ReLU DNN-based approximation methods across varying network architectures. Two sets of cases are considered: one in which the number of neurons (i.e., the width) is varied while keeping the number of hidden layers (length) fixed, and another where the length is varied with a fixed total width. We focus exclusively on the ten low price scenarios, as they represent the most likely and critical operating conditions.

As described in Section~\ref{ssec: Tractability analysis}, for each architecture, we train two sets of ReLU DNNs: the UC ReLU DNN embedded using Gurobi ML, PCTAR, and PCAR reformulations, and the cvxd ReLU DNN embedded using our proposed reformulation. Additionally, we perform a penalty sensitivity analysis for both PCAR and PCTAR across all architectures, as detailed in Section~\ref{ssec: Penalty sensitivity analysis}. Recall that the baseline ReLU DNN architecture used in Section~\ref{ssec: Tractability analysis} consists of three hidden layers with 10, 20, and 10 neurons, respectively, resulting in a total width of 40.

For the width sensitivity analysis, we fix the length at three layers and vary the total width to 20, 80, 200, and 400 neurons. The neurons are distributed across the layers in the same proportion as in the baseline model.

For the length sensitivity analysis, we fix the total width to 40 neurons (total width of the baseline ReLU DNN) and vary the number of layers using the following architectures: 1) single layer with 40 neurons; 2) two layers with 20 neurons each; 3)four layers with 5, 15, 15, 5 neurons; and 4) five layers with 2, 8, 20, 8, 2 neurons.  

\subsubsection{Tractability and solution quality analysis}
\label{sssec: width length tractability}
\begin{table}[htpb]
\centering

\begin{minipage}{0.49\textwidth}
\centering
\caption{Width sensitivity analysis}
\renewcommand{\arraystretch}{1}
\resizebox{\linewidth}{!}{%
\begin{tabular}{ccccc}
\toprule
 Width & Surrogate & Mean realised & Mean run- & Mean MIP\\
{} & model & profit DKK & time & gap \%  \\
\midrule
\multirow[m]{4}{*}{20} & Gurobi ML & 227 & 2 s & 0 \\
 & PCAR & -214 & 15 ms & 0 \\
 & PCTAR & -214 & \textcolor{blue}{8 ms} & 0 \\
 & Proposed & \textcolor{blue}{231} & 16 ms & 0 \\
\cline{1-5}
\addlinespace
\multirow[m]{4}{*}{80} & Gurobi ML & 221 & 3600 s & 127 \\
 & PCAR & -36 & 78 ms & 0 \\
 & PCTAR & -36 & 181 ms & 0 \\
 & Proposed & \textcolor{blue}{237} & \textcolor{blue}{56 ms} & 0 \\
\cline{1-5}
\addlinespace
\multirow[m]{4}{*}{200} & Gurobi ML & 115 & 3600 s & 884 \\
 & PCAR & -147 & 549 ms & 0 \\
 & PCTAR & -147 & 921 ms & 0 \\
 & Proposed & \textcolor{blue}{239} & \textcolor{blue}{259 ms} & 0 \\
\cline{1-5}
\addlinespace
\multirow[m]{4}{*}{400} & Gurobi ML & -18 & 3601 s & 236\hypertarget{targetpoint}{*} \\
 & PCAR & -9 & 3 s & 0 \\
 & PCTAR & -22 & 6 s & 0 \\
 & Proposed & \textcolor{blue}{238} & \textcolor{blue}{748 ms} & 0 \\
\bottomrule
\end{tabular}
}
\label{tab: Width tractability analysis}
\end{minipage}
\hfill
\begin{minipage}{0.49\textwidth}
\centering
\caption{Length sensitivity analysis}
\renewcommand{\arraystretch}{1}
\resizebox{\linewidth}{!}{%
\begin{tabular}{ccccc}
\toprule
 Length & Surrogate & Mean realised & Mean run- & Mean MIP\\
{} & model & profit DKK & time & gap \%  \\
\midrule
\multirow[m]{4}{*}{1} & Gurobi ML & 234 & 4 s & 1 \\
 & PCAR & -359 & \textcolor{blue}{13 ms} & 0 \\
 & PCTAR & -359 & 27 ms & 0 \\
 & Proposed & \textcolor{blue}{239} & 18 ms & 0 \\
\cline{1-5}
\addlinespace
\multirow[m]{4}{*}{2} & Gurobi ML & 233 & 87 s & 1 \\
 & PCAR & -118 & 38 ms & 0 \\
 & PCTAR & -118 & 70 ms & 0 \\
 & Proposed & \textcolor{blue}{237} & \textcolor{blue}{30 ms} & 0 \\
\cline{1-5}
\addlinespace
\multirow[m]{4}{*}{4} & Gurobi ML & 209 & 53 s & 1 \\
 & PCAR & 106 & 37 ms & 0\\
 & PCTAR & 106 & 59 ms & 0 \\
 & Proposed & \textcolor{blue}{234} & \textcolor{blue}{27 ms} & 0 \\
\cline{1-5}
\addlinespace
\multirow[m]{4}{*}{5} & Gurobi ML & 156 & 3600 s & 55 \\
 & PCAR & 47 & \textcolor{blue}{12 ms} & 0 \\
 & PCTAR & 47 & 32 ms & 0 \\
 & Proposed & \textcolor{blue}{219} & 35 ms & 0 \\
\bottomrule
\end{tabular}
}
\label{tab: Length tractability analysis}
\end{minipage}
\end{table}
Tables~\ref{tab: Width tractability analysis} and \ref{tab: Length tractability analysis} present the results of the width and length sensitivity analyses, respectively. As expected, across all ReLU DNN architectures, the LP-based surrogate models (PCAR, PCTAR, and the proposed method) consistently outperform Gurobi ML's MIP-based formulation in terms of computational efficiency. Notably, the width of the ReLU DNN has a substantially greater impact on computational time than its length. This is because increasing the number of neurons directly increases the number of constraints in all surrogate models, thus increasing computational complexity. For Gurobi ML in particular, the addition of neurons leads to more integer variables, which significantly increases runtime due to the exponential growth in the complexity of solving MIPs.

In contrast, the length of the ReLU DNN appears to have minimal impact on the computational performance of the LP-based surrogate models. However, for Gurobi ML, the mean runtime generally increases with network length, except in the case of length 4, where a reduction is observed. For length 5, the mean runtime abruptly spikes to 3600 seconds, suggesting potential scalability issues with deeper networks. Since the exact workings of the Gurobi ML's algorithm are not publicly disclosed, it is difficult to pinpoint the exact cause of this irregularity.

In both Tables~\ref{tab: Width tractability analysis} and \ref{tab: Length tractability analysis}, the mean realised profit is re-evaluated post-optimisation, as described in Section~\ref{sssec: Bidding behaviour}. The results clearly demonstrate that our proposed formulation consistently yields the highest profit across all ReLU DNN architectures. In contrast, the PCAR and PCTAR models generally incur losses due to the mis-estimation of the up-regulation flexibility cost, as discussed in Section~\ref{sssec: Bidding behaviour}. The only instances where these models make a profit occur when the number of hidden layers exceeds three, suggesting that the length of the ReLU DNN has some influence on the quality of their solution. However, even in these cases, the realised profit is less than half of what is achieved by our proposed method.

Meanwhile, Gurobi ML shows relatively better performance than PCAR and PCTAR in terms of realised profit, but its performance deteriorates significantly with increasing network width and length. This is primarily due to its poor scalability, as the solver struggles to find high-quality feasible solutions within the 3600-second time limit. This is evident from the increasing mean MIP gap percentages. Notably, for the architecture with a width of 400 neurons, Gurobi ML was unable to solve 8 out of 10 price scenarios within the time limit, and the reported mean MIP gap is based only on the two solvable cases, as indicated by \hyperlink{targetpoint}{*}. These results indicate that Gurobi ML does not tractably scale to wider ReLU DNNs, which poses challenges for applications requiring larger neural network architectures.

Overall, these findings demonstrate that our proposed ReLU DNN reformulation significantly outperforms existing SOTA methods in both computational efficiency and solution quality, across all tested ReLU DNN architectures.
\subsubsection{Surrogate model accuracy and tightness analysis}
\label{sssec: width length accuracy}
\begin{figure}[htbp]
    \centering
    \begin{subfigure}[b]{0.80\linewidth}
    \centering
    \includegraphics[width=\linewidth]{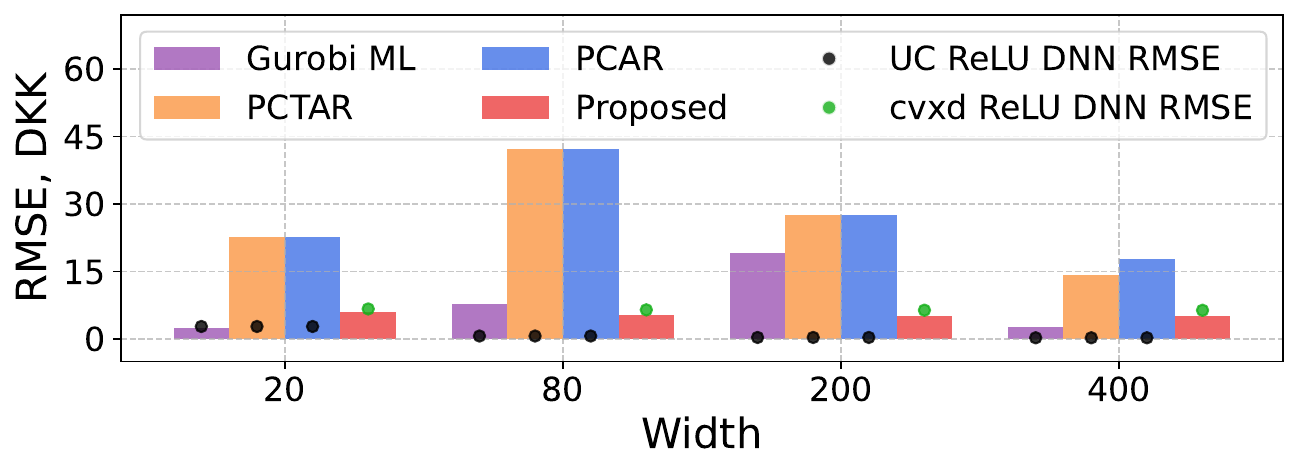}
    \subcaption{Varying ReLU DNN widths}
    \label{fig: rmse width scenarios}
    \end{subfigure}
    \hfill
    \begin{subfigure}[b]{0.80\linewidth}
        \centering
        \includegraphics[width=\linewidth]{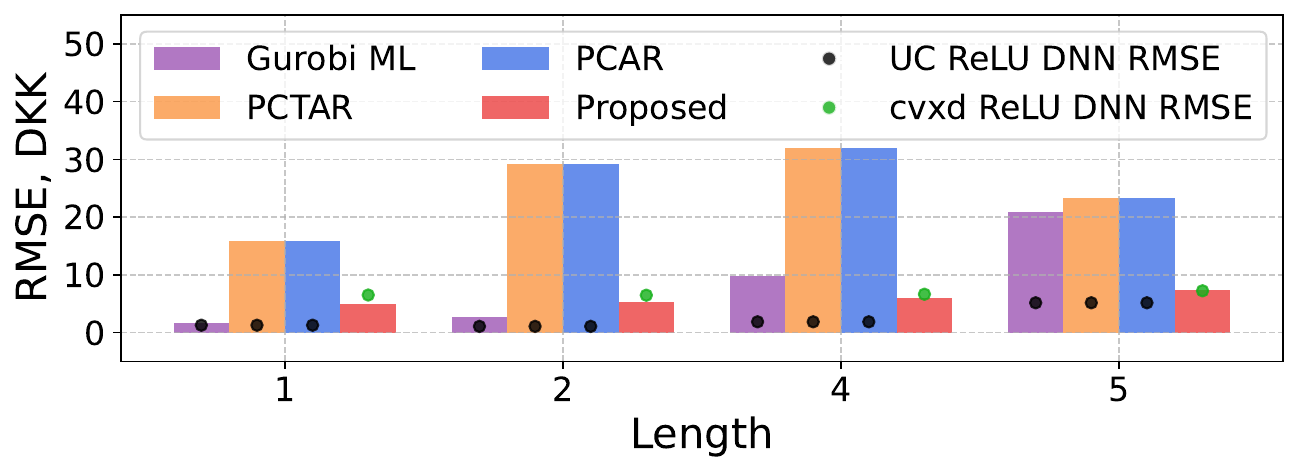}
        \subcaption{Varying ReLU DNN lengths}
        \label{fig: rmse length scenarios}
    \end{subfigure}
    \caption{RMSE between actual and surrogate model $\boldsymbol{\lambda^{\text{P}}}$ across varying ReLU DNN architectures, along with their corresponding training RMSE}
    \label{fig: rmse length and width}
\end{figure}
Figures~\ref{fig: rmse length and width}(a) and (b) present the RMSE between the true up-regulation flexibility price, computed post-optimisation, and the surrogate price generated within the optimisation model, across various ReLU DNN architectures. In addition, the training RMSE is reported for each architecture.  It is important to note that for each ReLU DNN architecture, we simulated the ten low price scenario over a 24-hour period, resulting in 240 data points per architecture, which were used to compute these RMSE values. 

The results demonstrate that our proposed ReLU DNN reformulation consistently achieves a lower RMSE across all tested architectures. In contrast, both PCAR and PCTAR exhibit the highest RMSE, indicating the poorest accuracy among the compared methods. Its accuracy also varies inconsistently across the ReLU DNN architectures, making it hard to comment on its behaviour with increasing length and width. The Gurobi ML model performs well for smaller and shallower ReLU DNNs (e.g., width of 20 and depth below 3), but its accuracy deteriorates as network complexity increases. An exception is observed at a width of 400, where Gurobi ML achieves low RMSE; however, as noted in Section~\ref{sssec: width length tractability}, only two of ten cases were feasibly solved in this configuration, which may skew the RMSE results. These variations in the accuracy of the surrogate model directly impact the mean realised profit, as discussed in Section~\ref{sssec: width length tractability}, where a higher accuracy corresponds to greater profit in most cases.

Further supporting the robustness of our approach, we observe that the proposed model consistently achieves RMSE values close to the training RMSE of the cvxd ReLU DNN. This behaviour clearly demonstrates the tightness of our formulation, as established in Theorem~\ref{theorem: A1}. For shallower and smaller DNNs, the Gurobi ML model achieves RMSE values close to the training RMSE of the UC ReLU DNN. However, as the network becomes deeper and more complex, this tightness is lost, and the resulting RMSEs become significantly higher than the corresponding UC ReLU DNN training RMSE. This behaviour is unexpected, as Gurobi ML is assumed to be based on an MIP reformulation of ReLU DNNs, which should, in principle, maintain tightness~\cite{Gurobi_ML}.

As expected, both PCAR and PCTAR yield RMSE values that are substantially higher than those of the UC ReLU DNNs they approximate, clearly indicating that they are not tight ReLU DNN reformulations. The reasoning behind this behaviour is detailed in Section~\ref{sssec: Bidding behaviour}, and stems from the structural uncoupling of ReLU DNN neurons from feed-forward equations, leading to reduced fidelity. These observations reinforce the conclusion that, unlike other SOTA techniques, our proposed model is not highly sensitive to the underlying ReLU DNN architecture and consistently maintains both accuracy and tightness across scenarios. 

Finally, we examine the effect of varying the ReLU DNN architectures on the training RMSEs themselves. It is evident that the cvxd ReLU DNN produces relatively consistent RMSEs across all architectures, which can be attributed to its limited expressivity due to the non-negativity constraints on the hidden layer weight matrices.  In contrast, UC ReLU DNNs show better performance as the network size increases, benefiting from their greater expressiveness. They are also highly accurate, achieving significantly lower RMSEs compared to the cvxd ReLU DNNs, demonstrating their strong learning capabilities. This highlights a core limitation of our proposed formulation: learning-based optimisation methods are inherently bounded by the quality of the learning models they embed. Thus, if the embedded model (e.g., the cvxd ReLU DNN) is suboptimal due to its training constraints, the overall performance of the OCL framework may suffer despite tight reformulations.

\subsection{Penalty sensitivity analysis for PCAR and PCTAR}
\label{ssec: Penalty sensitivity analysis}
\begin{figure}[t]
    \centering
    \begin{subfigure}[b]{0.70\linewidth}
    \centering
    \includegraphics[width=\linewidth]{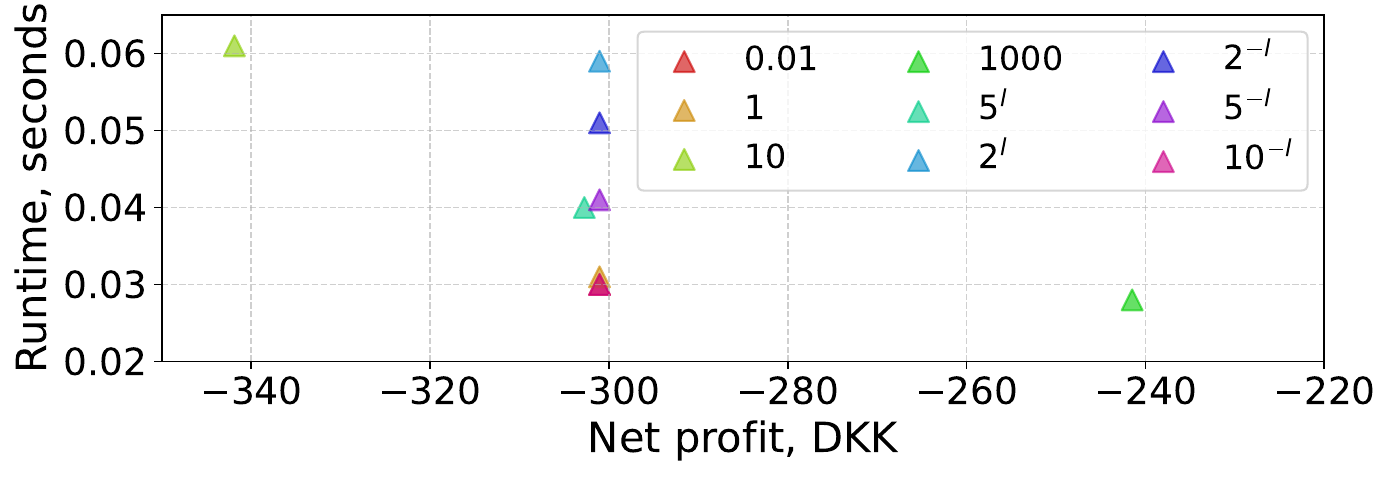}
    \subcaption{Low price scenario}
    \label{fig: example A}
    \end{subfigure}
    \hfill
    \begin{subfigure}[b]{0.70\linewidth}
        \centering
    \includegraphics[width=\linewidth]{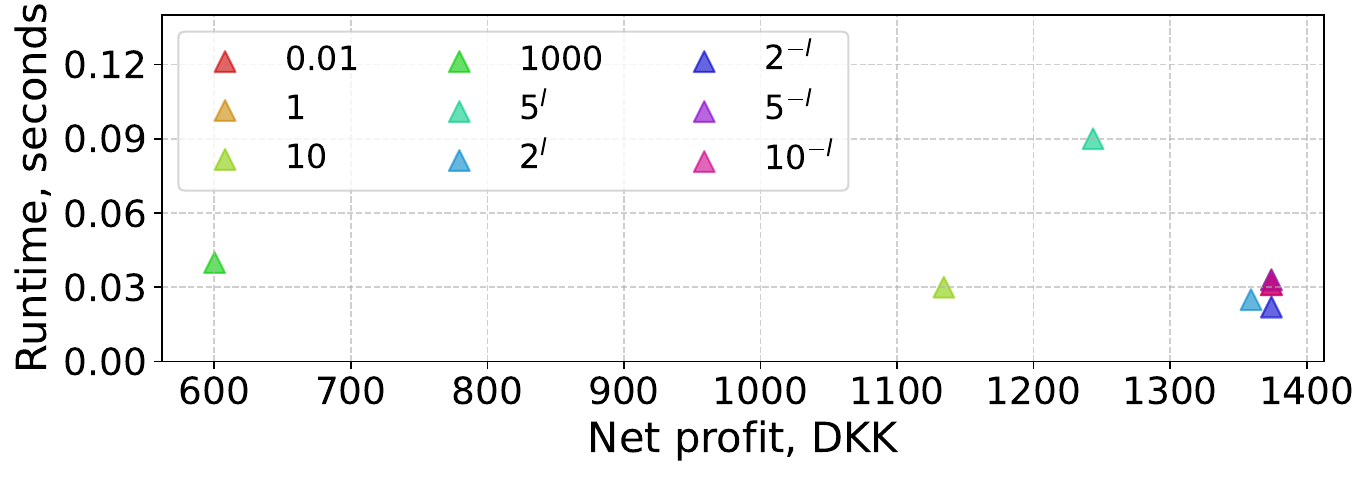}
        \subcaption{Medium price sceanrio}
        \label{fig: example B}
    \end{subfigure}
        \hfill
    \begin{subfigure}[b]{0.70\linewidth}
        \centering
    \includegraphics[width=\linewidth]{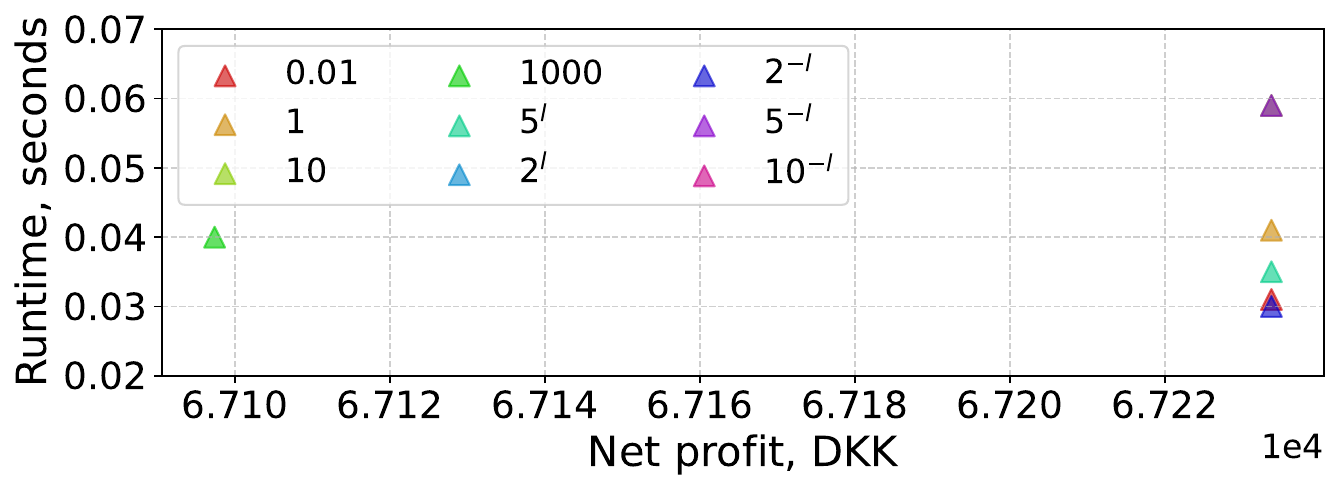}
        \subcaption{High price scenario}
        \label{fig: example B}
    \end{subfigure}
    \caption{Sample penalty sensitivity analysis examples for each price scenario}
    \label{fig: Penalty sensitivity examples}
\end{figure}
Because PCTAR and PCAR are penalty-driven approaches, we performed a penalty sensitivity analysis to determine the best penalty value for each price scenario and the ReLU DNN architecture. We considered two classes of penalty values: 1) constant penalty values applied to all neurons $0.01, 1, 10, 1000$ and 2) layer-based penalty values $5^{l}, 2^{l}, 2^{-l}, 5^{-l}, 10^{-l}$ for all neurons in the $l$\textsuperscript{th} hidden ReLU layer. 

Figure~\ref{fig: Penalty sensitivity examples} shows sample penalty sensitivity results for the three sets of price scenarios. The x-axis specifies the realised profit and the y-axis specifies the runtime for the scenario. The best penalty value is one which maximises realised profit and minimises runtime; hence, values closer to the bottom right corner of the graph. From the three examples, we can see that the best penalty value may change across cases and the profit obtained varies significantly with the penalty value chosen. 

Furthermore, we identified the optimal penalty value across all cases for the high price, medium price, and low price scenarios: $5^{-l}, 2^{-l}, 1000$ for the PCTAR reformulation and $1, 5^{-l}, 1000$ for the PCAR reformulation, based on the study conducted in Section~\ref{ssec: Tractability analysis}. Regardless of the variations in length and width discussed in Section~\ref{ssec: Length and width analysis}, a penalty value of 1000 consistently yielded the best results for all architectures. This suggests a strong correlation between the penalty value and the underlying optimisation parameters. This may work well in this application where most of the price scenarios fall into the low price case. However, in other applications where these parameters are more evenly distributed, identifying the best penalty value may prove more challenging.

Across all cases, the realised RMSE of PCAR and PCTAR are generally comparable, except when the width is set to 400, where PCAR achieves a lower RMSE. This discrepancy arises from insufficient $LB, UB$ selection in the PCTAR case, which did not fully capture the neuron outputs, leading to a worse realised profit value, as shown in Table~\ref{tab: Width tractability analysis}. This highlights the additional complexity involved in tuning the parameters for the PCTAR method. 

In summary, this section demonstrates that the performance of PCAR and PCTAR is highly sensitive to the choice of penalty values. Achieving competitive accuracy and runtime performance requires careful calibration, and improper tuning can lead to significant deterioration in both surrogate model fidelity and economic outcomes. 
\subsection{Discussion and practical considerations}
\label{ssec: key remarks}
In this section, we discuss the key insights obtained from this study and highlight practical considerations when implementing the various surrogate models.

The PWL approach is widely used in power system optimisation and showed the strongest performance in our study, accurately capturing prosumer behaviour and delivering the most favourable economic outcomes (Section~\ref{ssec: Tractability analysis}). It also solves the problem within about 30 seconds, which is acceptable for day-ahead scheduling. However, our case study was intentionally simplified to maintain tractability for both the PWL and Gurobi ML formulations. In practice, prosumer behaviour depends on many interdependent factors that cannot be captured through closed-form expressions and instead require data-driven modelling, which significantly increases computational effort. As a result, the PWL approach becomes impractical for large-scale or real-time applications, particularly as decarbonisation technologies expand. Although our analysis focuses on an aggregator setting, the insights extend to many decarbonisation-related optimisation problems that face similar modelling and tractability challenges.

Our results also highlight the potential of ReLU DNNs to effectively model nonlinear and nonconvex prosumer behaviour, even with relatively simple architectures (Section~\ref{sssec: width length accuracy}). This includes cvxd ReLU DNNs, which are theoretically well-suited for approximating convex functions. This suggests the need for further exploration, particularly as more data on prosumer behaviour becomes available. However, tractable reformulations are required to embed such models into optimisation problems. Although the standard MIP reformulation, coupled with bound-tightening strategies (as in Gurobi ML), offers theoretical tightness guarantees and performs well for shallow networks, our findings in Section~\ref{ssec: Length and width analysis} reveal that deeper or more complex ReLU DNNs pose computational challenges and can even compromise tightness, resulting in worse economic outcomes. In such cases, LP-based reformulations, such as those proposed in~\cite{zhang2023learning} and in this work, may provide more viable alternatives under the right conditions.

Although the PCAR and PCTAR methods are presented as generic LP reformulations for ReLU DNNs applicable to a wide range of optimisation problems, they inherently trade off tightness and fidelity due to their penalty-driven structure. As a result, despite their generality, they tend to yield negative economic outcomes in our case study (Sections~\ref{ssec: Tractability analysis} and~\ref{ssec: Length and width analysis}). A central challenge lies in tuning the penalty parameters, which are highly sensitive to both the scale of optimisation variables and the structure of the ReLU DNN (Sections~\ref{ssec: Length and width analysis} and~\ref{ssec: Penalty sensitivity analysis}). This sensitivity makes PCAR and PCTAR difficult to implement reliably in practice, as they offer no formal tightness guarantees for any fixed penalty setting—even if acceptable performance is observed in specific instances. Consequently, even if the original ReLU DNN captures complex system dynamics accurately, these surrogate reformulations may fail to reflect them within the optimisation problem.

By contrast, our proposed cvxd ReLU DNN effectively captures the nonlinear and nonconvex prosumer response function defined in this case study. When embedded in the aggregator’s profit maximisation problem using our LP reformulation, it achieves good quality solutions comparable to the PWL method at a fraction of the computational cost (Section~\ref{ssec: Tractability analysis}). Moreover, it demonstrates consistent performance across a range of optimisation scales and ReLU DNN architectures. As such, cvxd ReLU DNNs, when paired with our proposed LP formulation, offer a promising pathway to improving computational efficiency while preserving solution quality within the OCL framework. That said, their applicability is limited by both the restricted expressiveness of cvxd ReLU DNNs and the ReLU DNN output minimisation structure required by our reformulation, confining them to specific problem classes. Nonetheless, they warrant further exploration as a viable surrogate modelling strategy to learn other similar nonlinear and nonconvex functions, such as CO\textsubscript{2} emissions or nonlinear generator cost functions, within modern power system optimisation problems.

Finally, our findings suggest that the learning behaviour of the ReLU DNN within the OCL framework plays a critical role in determining overall optimisation performance. As illustrated in Fig.~\ref{fig: Bidding behaviour}(b)–(e) and discussed in Sections~\ref{sssec: Bidding behaviour} and~\ref{sssec: width length accuracy}, the quality of the learned approximation significantly influences the effectiveness of the surrogate model when integrated into the bidding strategy of the aggregator. While this observation may appear intuitive, it underscores a practical challenge: ensuring that the ReLU DNN is trained with sufficient fidelity and generalisation capability to accurately reflect underlying system dynamics. In real-world implementations, this requires careful consideration of training data quality, network architecture, and regularisation techniques, all of which are essential for realising the full benefits of the OCL framework.
\section{Conclusions and future scope}
\label{sec: Conclusions}
\noindent This paper developed and evaluated a scalable surrogate modelling strategy for embedding ReLU DNNs within power system optimisation. We proposed an LP reformulation for convexified ReLU DNNs, where weights beyond the first layer are constrained to be non-negative, enabling an exact and tight representation when the network output is minimised. Within an energy aggregator’s bidding optimisation problem for the Danish mFRR market, the convexified ReLU DNN approach achieves comparable economic performance to PWL, while providing substantially reduced computational burden and consistent fidelity across neural network architectures. 

\noindent The proposed method avoids the sensitivity and penalty tuning required in PCAR/PCTAR, and remains tractable where the Gurobi-ML MIP reformulation becomes computationally prohibitive. While applicability is limited to optimisation models that minimise the ReLU DNN output and to functions learnable by convexified networks, the results demonstrate the viability of the approach for surrogate modelling of complex flexibility behaviour. This opens promising directions for surrogate modelling in power system optimisation problems, including applications in generator cost modelling, CO\textsubscript{2} emissions estimation, and chance-constrained formulations. 
\section*{CRediT authorship contribution statement}
\noindent\textbf{Yogesh Pipada Sunil Kumar:} Conceptualisation, Methodology, Software, Formal analysis, Investigation, Writing - original draft, Writing - review and editing. \textbf{S. Ali Pourmousavi:} Conceptualisation, Supervision, Project administration, Funding acquisition, Writing - review and editing. \textbf{Jon A.R. Liisberg:} Supervision, Writing - review and editing. \textbf{Julian Lesmos-Vinasco:} Supervision, Writing - review and editing.
\section*{Acknowledgement}
\noindent This project is supported by the Australian Government Research Training Program (RTP) through the University of Adelaide, and a supplementary scholarship provided by Watts A/S, Denmark. During the preparation of this work, the authors used ChatGPT~\cite{chatgpt} in order to improve readability and language. After using this tool, the authors reviewed and edited the content as needed and take full responsibility for the content of the publication.
\bibliography{bibliography}
\bibliographystyle{elsarticle-num}

\end{document}